\newcommand\blfootnote[1]{%
  \begingroup
    \renewcommand\thefootnote{}%
    \renewcommand\@makefnmark{}%
    \renewcommand\@makefntext[1]{##1}%
    \footnotetext{#1}%
  \endgroup
}
\begin{document}
\title{Ransomware Negotiation: Dynamics and Privacy-Preserving Mechanism Design}
\titlerunning{Ransomware Negotiation}

\author{Haohui Zhang\orcidlink{0009-0000-5596-5668}\inst{1} \and
Sirui Shen\orcidlink{0009-0000-4964-0971}\inst{2} \and
Xinyu Hu\orcidlink{0009-0000-5841-8774}\inst{1} \and
Chenglu Jin\orcidlink{0000-0001-6306-8019}\inst{2}}
\authorrunning{H. Zhang et al.}
\institute{University of Twente, Drienerlolaan 5, Enschede, The Netherlands 
\email{\{h.zhang-5,x.hu\}@utwente.nl}\\
\and
Centrum Wiskunde \& Informatica, Science Park 123, Amsterdam, The Netherlands\\
\email{\{sirui.shen,chenglu.jin\}@cwi.nl}}
\maketitle              %
\begin{abstract}
Ransomware attacks have become a pervasive and costly form of cybercrime, causing tens of millions of dollars in losses as organizations increasingly pay ransoms to mitigate operational disruptions and financial risks. While prior research has largely focused on proactive defenses, the post-infection negotiation dynamics between attackers and victims remains underexplored.
This paper presents a formal analysis of attacker–victim interactions in modern ransomware incidents using a finite-horizon alternating-offers bargaining game model. Our analysis demonstrates how bargaining alters the optimal strategies of both parties.
In practice, incomplete information—attackers lacking knowledge of victims’ data valuations and victims lacking knowledge of attackers’ reservation ransoms—can prolong negotiations and increase victims’ business interruption costs. To address this, we design a Bayesian incentive-compatible mechanism that facilitates rapid agreement on a fair ransom without requiring either party to disclose private valuations.
We further implement this mechanism using secure two-party computation based on garbled circuits, thereby eliminating the need for trusted intermediaries and preserving the privacy of both parties throughout the negotiation. To the best of our knowledge, this is the first automated, privacy-preserving negotiation mechanism grounded in a formal analysis of ransomware negotiation dynamics.

\keywords{Ransomware \and Negotiation \and Mechanism Design \and Garbled Circuits \and Secure Two-Party Computation \and Game Theory}
\end{abstract}

\blfootnote{Chenglu Jin is (partially) supported by project CiCS of the research programme Gravitation, which is (partly) financed by the Dutch Research Council (NWO) under the grant 024.006.037. We thank Dr. Jie Zhang for his feedback on a draft of this work.}

\section{Introduction}
Ransomware has become one of the most severe cybersecurity threats, inflicting substantial financial and operational damage on individuals, corporations, and public institutions \cite{connolly2019rise,laszka2017on}. Modern strains go far beyond simple file encryption \cite{hernandez2020an}, combining techniques such as credential theft and large-scale data exfiltration \cite{arctic2025threat} to increase leverage over victims and amplify the pressure to pay. This evolution has made ransomware incidents not only more costly, but also strategically complex, as attackers and victims engage in high-stakes negotiations that determine both the ransom amount and the speed of recovery.

According to the Sophos 2025 Ransomware Report \cite{sophos2024}, 59\% of organizations experienced a ransomware attack in 2024, with 70\% of those incidents resulting in data encryption. The economic impact is stark: the median ransom payment surged fivefold from \$400,000 in 2023 to \$2 million in 2024, with the mean payment rising to \$3.96 million. The average ransom demand also increases to \$2.73 million, nearly \$1 million more than the previous year. 

When facing a ransomware attack, victims typically have two undesirable options: refuse to pay and risk permanent data loss and prolonged recovery, or pay the ransom and remain vulnerable to future attacks, with no guarantee that the data can actually be fully recovered. In practice, many victims choose to pay to minimize downtime and mitigate further financial losses \cite{boticiu2024how}. According to the Sophos report, around 56\% of organizations whose data is encrypted are able to recover their data by paying the ransom in 2024.

Business interruption costs often far exceed the ransom itself. Statista reports that the average downtime following a ransomware attack is 24 days, while the average cost per incident reached \$1.85 million in 2023—a 13\% increase over five years \cite{statista2023}. According to Intermedia \cite{intermedia2016}, downtime-related losses—including significant data recovery costs, reduced customer satisfaction, missed deadlines, lost sales, and traumatized employees—could be even more damaging than the ransom payment. Consequently, in many cases, business interruption costs are the largest source of financial loss following a ransomware incident.

Negotiation, as one of the most important steps before the victim pays the ransom, has been largely overlooked in academic research. Few works explore the negotiation dynamics from a strategic perspective \cite{boticiu2024how,faivre2022negotiation,hofmann2020how,kumamoto2023evaluating,cymru2022analyzing}, and even fewer apply game-theoretic methods \cite{meurs2023double,ryan2022dynamics,vakilinia2021mechanism}. This paper fills in this gap by modeling the interactions as a finite-horizon alternating-offers bargaining game and conducting equilibrium analysis under the complete information scenario. With the goal of reducing business interruption costs for the victim, we further propose a bargaining strategy and its corresponding best response in the incomplete information setting. Inspired by this bargaining strategy profile, we design an automated negotiation mechanism that facilitates efficient agreement by providing the attacker with appropriate financial incentives, thereby accelerating the recovery process for the victim.

To protect sensitive information during this high-stakes process, we implement the negotiation mechanism via a secure two-party computation technique known as garbled circuits~\cite{bellare2012foundations}. This approach eliminates the need for trusted intermediaries while ensuring that neither party reveals their private valuations. Even if the negotiation fails, the attacker learns nothing about the victim’s internal data valuation, preserving strategic advantage and minimizing the risk of further exploitation.

\paragraph{Our Contributions}
In this paper, we first present a comprehensive multistage game model for modern ransomware attacks that explicitly incorporates negotiation dynamics. Our analysis demonstrates how bargaining rounds influence optimal strategies for both attackers and victims, establishing the conditions for subgame perfect Nash equilibrium (SPNE) under a complete information scenario and introducing a strategy profile in the incomplete information setting. Furthermore, we design a novel Bayesian incentive-compatible negotiation mechanism and implement our mechanism in a secure two-party computation protocol. Our privacy-preserving negotiation mechanism allows the victim and attacker to rapidly reach a mutually agreed ransom price while maintaining privacy.
To the best of our knowledge, this paper presents the first automated negotiation mechanism implemented using secure two-party computation.

\medskip
The remainder of this paper is organized as follows. Section~\ref{Sec:Lit} reviews major related works on the game-theoretic analysis of ransomware attacks. Section~\ref{Sec:GT} models the interaction between a ransomware attacker and a victim, and presents related equilibrium analysis. Section~\ref{Sec:Mec} presents the details of our proposed mechanism and the corresponding protocol. Section~\ref{Sec:Imp} discusses the garbled circuit implementation of the proposed mechanism. Section~\ref{Sec:Con} concludes the paper.

\section{Related Work} \label{Sec:Lit}
Ransomware has been well studied as a multistage game to thoroughly model the diverse interactions between the attacker and the victim (or defender) over time. Prior research \cite{caporusso2019game,cartwright2019pay,laszka2017on,li2020ransomware,yin2023deterrence,zhang2022multistage} has modeled various aspects of this adversarial interaction, including attack strategies, law enforcement involvement, backup policies, data-selling threats, and defense mechanisms.
Unlike these works, our analysis deliberately excludes the attacker's targeting stage and the victim's backup stage. Instead, we focus on the post-infection negotiation phase, modeling how a victim’s financial loss accumulates over time and analyzing the strategic bargaining processes. While some studies have assumed the absence of any negotiation or bargaining opportunity \cite{li2020ransomware}, in reality, the victims usually have the opportunity to negotiate and bargain, as in CONTI \cite{cymru2022analyzing}. 

Pratical guidance on ransomware negotiation has emerged alongside limited academic research on the topic. As the ransomware threat has intensified, guidance on how to negotiate with attackers began emerging as early as 2016 \cite{cristal2016how}. Since then, many blog posts have been published on the internet by negotiators and cyber insurance providers for victims \cite{brainstorm,vakulovlevelblue}. However, the academic study of ransomware negotiation remains relatively underexplored. One of the earliest formal explorations is by Hofmann \cite{hofmann2020how}, who outlines strategic negotiation practices based on his experience in cyber threat intelligence. Team Cymru \cite{cymru2022analyzing} analyzes negotiation reports from the CONTI group, highlighting that attackers often assess victims' financial positions using public information and adjust their demands accordingly.
Ryan et al. \cite{ryan2022dynamics} focus on targeted ransomware negotiation and model it as an asymmetric non-cooperative two-player game, offering insights into optimal strategies under imperfect information. Similarly, Meurs et al. \cite{meurs2023double} examine double extortion ransomware through a signaling game framework with double-sided information asymmetry. Their results show that when attackers lack precise knowledge of a victim’s data valuation, their expected payoff decreases, thereby lowering ransom demands and discouraging escalation. Conversely, signaling high data value can lead to higher demands and increased leverage for the attacker.
Finally, Boticiu and Teichmann \cite{boticiu2024how} offer a comprehensive overview of ransomware negotiation procedures, drawing from operational insights to detail typical negotiation phases and highlight the importance of timing, information management, and disaster recovery plan.

Prior to this work, only three published studies have modeled ransomware negotiation as a bargaining game: one by Hernández-Castro et al. \cite{hernandez2017economic}, one by Cartwright et al. \cite{cartwright2019pay} and one by Zhang et al. \cite{zhang2025bargaining}; the last is currently released as a preprint. Hernández-Castro et al. \cite{hernandez2017economic} propose a static game model of ransomware attacks, evaluating attacker and victim payoffs under three pricing strategies: fixed-rate pricing, price discrimination, and bargaining. However, the authors directly refer to results from classic bargaining literature without developing a ransomware-specific bargaining model. Cartwright et al. \cite{cartwright2019pay} develop a dynamic game that incorporates bargaining within a broader framework, accounting for law enforcement intervention. They derive optimal strategies under both complete and incomplete information regarding victims' willingness-to-pay. Their findings suggest that a criminal's bargaining power increases with the threat of irrational aggression and is further enhanced by a credible commitment to decrypt files upon payment. This reputation-building dynamic aligns with our findings. However, the bargaining process itself is not modeled as a standalone mechanism in \cite{cartwright2019pay} but rather embedded within a larger strategic setting. 

The work of Zhang et al.~\cite{zhang2025bargaining} is conducted independently and in parallel with ours. They introduce a dedicated Ransomware Bargaining Game (RGB) analysis framework that systematically examines attacker-victim negotiation in ransomware events. Their framework distinguishes between attacker types based on their attitude toward ransom and analyzes the convergence and equilibrium properties across three negotiation formats: one-round, multi-round, and continuing-round RBGs. 
In contrast to \cite{hernandez2017economic,zhang2025bargaining}, our model does not rely on discounting future payoffs. Instead, we formulate attenuation based on the victim's financial losses accumulating over time. Furthermore, unlike \cite{hernandez2017economic,zhang2025bargaining}, where the number of rounds is fixed or based on the victim’s willingness-to-pay or the attacker’s level of greed, the number of rounds in our framework is determined by the reservation values of both the attacker and the victim. Most importantly, distinct from all three prior works, our analysis places a stronger emphasis on the role of complete and incomplete information regarding these reservation values, which significantly shapes the negotiation outcome.

Before this work, only one paper \cite{vakilinia2021mechanism} attempts to design a mechanism to facilitate negotiation between ransomware attackers and victims with the explicit goal of reducing business interruption costs. Vakilinia et al. \cite{vakilinia2021mechanism} define two ransomware dilemma models, explore mechanism design approaches, and propose smart contract-based solutions to eliminate the need for a trusted third party while enabling atomic ransom-key exchanges.
However, their framework prevents any agreement when the attacker's minimum acceptable price exceeds half of the victim's true valuation. Moreover, their model overlooks a critical dynamic: as time passes, a victim accumulates additional losses that effectively reduce her valuation of the encrypted data. This limitation results in their design not creating sufficient incentives for attackers to provide decryption keys rapidly. 
Additionally, like other ransomware-negotiation studies, their work overlooks the strategic importance of keeping victims’ reservation values confidential during bargaining. This privacy protection is critical to maintaining negotiating leverage in situations where information asymmetries exist.

\section{Game-Theoretic Analysis}
\label{Sec:GT}
In this study, we extend the ransomware multistage game model proposed by Caporusso et al. \cite{caporusso2019game} to include reputation systems and negotiation. We emphasize the importance of reputation systems and the negotiation processes for both attackers and victims.
We model the negotiation process as a finite-horizon alternating-offers bargaining game between an attacker and a victim, provide the SPNE based on backward induction under the assumption of complete information, and propose a strategy profile to urge the attacker to quickly reach an agreement with the victim under the assumption of incomplete information. 

\begin{table}[t]
\centering
\footnotesize %
\centering
\caption{Notation Description}
\vspace{-.69mm}
\resizebox{\columnwidth}{!}{
\begin{tabular}{p{1.5cm} p{9cm}}
\hline
Notation &  Description \\
\hline
$v \in \mathbb{R}_{>0}$ & Actual value of the data owned by the victim \\
$c \in \mathbb{R}_{\geq0}$ & Cost of performing the attack and handling the data, with $c_r$ representing the cost for releasing the files and $c_d$ for deleting the files, $c_r > c_d \cong 0$ \\
$r \in \mathbb{R}_{\geq0}$ & Ransom demand proposed by the attacker, with $r_f$ the final ransom decided through negotiation, $r_{min}$ the minimum ransom the attacker can accept ($r_{min} \gg c$), and $r_{max}$ the maximum ransom the victim can pay \\
$\tau \in \mathbb{R}_{\geq0}$ & Trust level or reputation value of the attacker, with $\tau_g$ representing gain in trust and $\tau_l$ representing loss in trust, $\tau_l \cong \tau_g$ \\
$\kappa \in \mathbb{R}_{>0}$ & Credibility of threat posed by the attacker, with $\kappa_g$ representing gain in credibility and $\kappa_l$ representing loss in credibility, $\kappa_l \cong \kappa_g$ \\
\hline
\end{tabular} \label{tab:notation}}
\label{tab:notation}
\end{table}

\subsection{Multistage Game Model}
The ransomware scenario can be modeled as a sequential, multistage game involving interactions between the attacker and victim. 
In this study, we identified three critical and fundamental stages in the process of a ransomware attack. The related notations are defined in Table~\ref{tab:notation}.

(1) \textit{Stage 1 - ransom requested.} The attacker infects the victim's computer systems and issues a ransom demand $r > 0$, typically delivered via email or a designated website.

(2) \textit{Stage 2 - negotiation.} Having seen the demand $r$, the victim takes an action from \textbf{Pay (V1)}, \textbf{Don't pay (V2)}, and \textbf{Make a counteroffer (V3)}. Notably, almost all (genuine) ransomware strains enable some form of communication between the attacker and victims, allowing victims to make a counteroffer \cite{f-secure}. One key reason is that the attacker does not know exactly the actual value of the data $v$ and the highest ransom $r_{max}$ the victim can afford. 
Although the "irrational aggression" (attacker does not accept any counteroffer) can increase the credibility of threats $\kappa$ posed by the attacker and may get a higher optimal ransom demand \cite{cartwright2019pay}, it is not the attacker's interest to make a ransom demand that is not affordable by the victim. 
Therefore, bargaining is a key aspect of the ransomware game \cite{caporusso2019game,cartwright2019pay}. 

Suppose the victim chooses \textbf{V3}, then the attacker will evaluate the new offer and choose to \textbf{Accept (A1)}, \textbf{Don't accept (A2)}, or \textbf{Make a counteroffer (A3)}. If the attacker chooses \textbf{A2}, the victim needs to reconsider the attacker's last offer. Next, it is the victim's turn to react. This bargaining process continues until the victim chooses \textbf{V1} or \textbf{V2}, or the attacker does not have any patience to negotiate.
Usually, if the attacker has no more patience, she will notify the victim and give the victim a last chance to pay. We denote the final amount paid by the victim as $r_f$, normally $r_f \leq r$. 

(3) \textit{Stage 3 - cooperation or defection.} If the victim pays the ransom, the attacker needs to choose between \textbf{Cooperate (A4)} (release the data with cost $c_r$) and \textbf{Defect (A5)} (delete the data with cost $c_d$). Note that random destruction can be equated with \textbf{A5}. If the victim does not choose to compromise, then the attacker needs to choose between \textbf{Release (A6)} and \textbf{Punish (A7)} (delete).

We define a ransomware game that skips negotiation stage and instead takes the final $r_f$ as given. In this formulation, we introduce two reputation parameters for the attacker, the trust level $\tau$ and the credibility of the threat $\kappa$. These capture the effectiveness of any real‐world reputation system, whether maintained by an insurer \cite{stone2020fbi}, an online forum or other intermediary mechanisms \cite{cymru2022analyzing}. In the extreme case of a fully anonymous attacker, we have $\tau=0$ and $\kappa\gg0$. If the reputation system is imperfect, $\tau \approx 0$. The extensive‐form representation of the game, along with the resulting payoff pairs, is shown in Figure~\ref{fig:backward-induction}. We can observe that the best possible outcome for the victims is to receive a $0$ payoff. Since the attacker always acts in response to the victim’s decisions, the victim is placed in a highly disadvantageous position with virtually no bargaining power \cite{hernandez2017economic}.

\begin{figure}[!t]
\tikzset{
  solid node/.style={circle,draw,inner sep=1.2,fill=black},
  hollow node/.style={circle,draw,inner sep=1.2},
}
\centering
\begin{tikzpicture}[font=\footnotesize]
  \tikzset{
    level 1/.style={level distance=15mm,sibling distance=60mm},
    level 2/.style={level distance=15mm,sibling distance=30mm},
  }
 
    \centering
    \node[hollow node,label=above:{Victim}]{V}
    child{node[hollow node,label=left:{Attacker}]{A}
      child{node(l1)[solid node, label=below:{$(-r_f, r_f-c_r+\tau_g)$}]{}
        edge from parent node[left]{$\textbf{A4}$}
      }
      child{node(l2)[solid node, label=below:{$(-r_f-v, r_f - c_d- \tau_l)$}]{}
        edge from parent node[right]{$\textbf{A5}$}
      }
      edge from parent node[left,xshift=-10]{$\textbf{V1}$}
    }
    child{node[hollow node,label=right:{Attacker}]{A}
      child{node(r1)[solid node, label=below:{$(0, -c_r-\kappa_l+\tau_g)$}]{}
        edge from parent node[left]{$\textbf{A6}$}[dashed]
      }
      child{node(r2)[solid node, label=below:{$(-v, -c_d+\kappa_g)$}]{}
        edge from parent node[right]{$\textbf{A7}$}
      }
      edge from parent node[right,xshift=10]{$\textbf{V2}$}
    }
  ;
\end{tikzpicture}
  \caption{Extensive Form of Ransomware Game with Perfect Reputation (The brackets at leaf node are (victim's payoff, attacker's payoff))}
  \label{fig:backward-induction}
\end{figure}
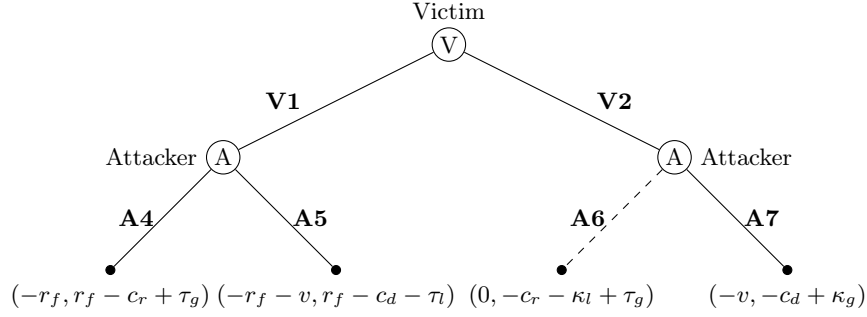

\subsection{Subgame Perfect Nash Equilibrium}
\begin{proposition}
        If the ransomware game is a sequential game where $\tau \approx 0$ and $\kappa > 0$, there exists a unique subgame perfect Nash equilibrium $(\mathbf{V2}, \mathbf{A7})$. 
    \label{thm: ransomware 1.0}
\end{proposition}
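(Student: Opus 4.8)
The plan is to prove both existence and uniqueness by backward induction on the two-level game tree of Figure~\ref{fig:backward-induction}, resolving the attacker's two decision nodes first and then folding the results back to the victim's root choice. Throughout I would read the proposition's label $(\mathbf{V2},\mathbf{A7})$ as the realized equilibrium path, keeping in mind that the complete SPNE strategy profile must also specify the attacker's (off-path) action after \textbf{V1}.

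First I would analyze the subgame following \textbf{V1}, where the victim pays. The attacker compares the payoff of \textbf{A4}, namely $r_f - c_r + \tau_g$, against that of \textbf{A5}, namely $r_f - c_d - \tau_l$. Their difference is $(r_f - c_d - \tau_l) - (r_f - c_r + \tau_g) = (c_r - c_d) - (\tau_g + \tau_l)$. Because $c_r > c_d$ and $\tau_g, \tau_l \approx 0$, this is strictly positive, so \textbf{A5} is the attacker's unique best response after \textbf{V1}, yielding victim payoff $-r_f - v$. Next I would analyze the subgame following \textbf{V2}, where the victim refuses. Here the attacker compares \textbf{A6}, with payoff $-c_r - \kappa_l + \tau_g$, against \textbf{A7}, with payoff $-c_d + \kappa_g$. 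The difference $(-c_d + \kappa_g) - (-c_r - \kappa_l + \tau_g) = (c_r - c_d) + (\kappa_g + \kappa_l) - \tau_g$ is again strictly positive, since $c_r > c_d$, $\kappa_g, \kappa_l > 0$, and $\tau_g \approx 0$. Hence \textbf{A7} is the attacker's unique best response after \textbf{V2}, yielding victim payoff $-v$.

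Finally I would fold these back to the root. Anticipating \textbf{A5} after paying and \textbf{A7} after refusing, the victim compares $-r_f - v$ (from \textbf{V1}) against $-v$ (from \textbf{V2}); since $r_f > 0$, the latter strictly dominates, making \textbf{V2} the victim's unique best response. Uniqueness of the SPNE then follows immediately because every comparison above is a strict inequality, leaving no alternative optimal action at any node.

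I expect the only delicate point to be justifying that the reputation terms $\tau_g, \tau_l$ are small enough to leave the sign of each difference governed by the strictly positive gaps $c_r - c_d$ and $\kappa_g + \kappa_l$. Under the stated regime $\tau \approx 0$ with $\kappa > 0$ this holds, but it is precisely where the approximate relations $\cong$ and $\approx$ from Table~\ref{tab:notation} must be interpreted as ``negligible relative to'' the cost and credibility gaps rather than as exact equalities; making this reading explicit is what secures the strictness needed for uniqueness.
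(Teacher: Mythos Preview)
Your proposal is correct and follows the same backward-induction approach as the paper's own proof, which is considerably more terse (it simply invokes $c_r > c_d$, $\tau_g,\tau_l \approx 0$, $\kappa_l \cong \kappa_g > 0$ and asserts the conclusion). Your version is more explicit---spelling out each payoff comparison, noting that the full SPNE profile includes the off-path action \textbf{A5} after \textbf{V1}, and flagging the implicit assumption $r_f > 0$ needed for strictness at the root---but the underlying argument is identical.
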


\begin{proof}
    Given that there is not a valid reputation system, thus, $\tau_g, \tau_l \approx 0$ and $\kappa_l \cong \kappa_g > 0$. As $c_r > c_d$, through backward induction on the external-form game given in Figure \ref{fig:backward-induction}, $(\textbf{V2}, \textbf{A7})$ is the unique SPNE. \qed
\end{proof}

From Proposition \ref{thm: ransomware 1.0}, we can deduce that victims will only pay the ransom if they believe it offers a good chance of recovering their files, making it essential for attackers to establish a reputation system to build trust with rational victims. 
 
An efficient reputation system with perfect accuracy can be implemented on a blockchain or maintained by authorities or reputable third parties such as insurance companies. As the victim's willingness to pay the ransom highly depends on the attacker's reputation, the increase and decrease of trust and credibility of the threat can directly affect the attacker's future revenue. A natural hypothesis posits that $\tau < \kappa$ and $\tau_l \cong \tau_g > c_r$ as in \cite{caporusso2019game}.
\begin{proposition}
    If the ransomware game is a sequential game where $\kappa > \tau > 0$ and  $\tau_g + \tau_l > c_r$, if $v < r_f$ or $r_{max} < r_f$, then there exists a subgame perfect Nash equilibrium $(\mathbf{V2, A7})$; if $r_f < v$ and $r_f < r_{max}$, then there exists a subgame perfect Nash equilibrium $(\mathbf{V1, A4})$.
    \label{thm:ransome 1.0 reputation}
\end{proposition}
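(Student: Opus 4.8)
The plan is to establish both equilibria by backward induction on the extensive form in Figure~\ref{fig:backward-induction}, resolving the attacker's terminal decisions first and then reading off the victim's root choice under each hypothesis on $v$, $r_f$, and $r_{max}$. The structure mirrors the proof of Proposition~\ref{thm: ransomware 1.0}, but now the reputation terms are non-negligible, so the attacker's preferred action in the \textbf{V1} subgame flips relative to the no-reputation case.

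First I would resolve the two attacker subgames. In the subgame reached after \textbf{V1}, the attacker compares the \textbf{A4} payoff $r_f - c_r + \tau_g$ with the \textbf{A5} payoff $r_f - c_d - \tau_l$; the former is larger exactly when $\tau_g + \tau_l > c_r - c_d$. Since $c_d \cong 0$ gives $c_r - c_d \le c_r$, the hypothesis $\tau_g + \tau_l > c_r$ forces \textbf{A4}. In the subgame reached after \textbf{V2}, the attacker compares the \textbf{A6} payoff $-c_r - \kappa_l + \tau_g$ with the \textbf{A7} payoff $-c_d + \kappa_g$; rearranging, \textbf{A7} is preferred whenever $\kappa_g + \kappa_l > \tau_g - (c_r - c_d)$. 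Using $\kappa > \tau > 0$ together with $\kappa_l \cong \kappa_g$ and $\tau_l \cong \tau_g$, the left-hand side dominates $\tau_g$, while $c_r - c_d \ge 0$ only lowers the right-hand side, so \textbf{A7} is selected. These two resolutions pin down the victim's continuation payoffs: $-r_f$ after \textbf{V1} and $-v$ after \textbf{V2}.

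Next I would analyze the victim's root decision. Paying is feasible only when the demand lies within the victim's budget, i.e.\ $r_f \le r_{max}$; otherwise \textbf{V1} is unavailable and the victim is forced into \textbf{V2}. When paying is feasible, the victim compares $-r_f$ against $-v$ and strictly prefers \textbf{V1} iff $r_f < v$. Combining feasibility with this comparison produces the two cases of the statement: if $v < r_f$ (paying is dominated) or $r_{max} < r_f$ (paying is infeasible), the victim chooses \textbf{V2}, yielding the equilibrium $(\mathbf{V2, A7})$; if instead $r_f < v$ and $r_f < r_{max}$, the victim chooses \textbf{V1}, yielding $(\mathbf{V1, A4})$.

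The main obstacle is not the backward induction itself but making the parameter comparisons rigorous given the loose $\cong$ notation, in particular translating the aggregate conditions $\kappa > \tau$ and $\tau_g + \tau_l > c_r$ into the componentwise inequalities needed at each attacker node. A secondary subtlety is that the budget constraint $r_f \le r_{max}$ does not appear in Figure~\ref{fig:backward-induction}: it must be imposed as a restriction on the victim's available actions rather than derived from leaf payoffs, and it is precisely this constraint that produces the $r_{max}$ branch of the case split.
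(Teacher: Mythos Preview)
Your proposal is correct and follows essentially the same backward-induction approach as the paper: resolve the attacker's terminal choice in each subgame (obtaining \textbf{A4} after \textbf{V1} from $\tau_g+\tau_l>c_r$ and \textbf{A7} after \textbf{V2} from $\kappa>\tau$), then use the resulting continuation payoffs $-r_f$ versus $-v$, together with the feasibility constraint $r_f\le r_{max}$, to determine the victim's root action. Your treatment of the parameter comparisons is in fact slightly more explicit than the paper's, which invokes the background assumption $\tau_l\cong\tau_g>c_r$ rather than the proposition's stated hypothesis $\tau_g+\tau_l>c_r$ when handling the \textbf{V1} subgame.
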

\begin{proof}
    Given that $\kappa > \tau > 0$, we have $\kappa_l+\kappa_g > \tau_g - c_r$. Thus, the attacker's best response to $\textbf{V2}$ is $\textbf{A7}$. Since $\tau_l \cong \tau_g > c_r > c_d$, it follows that $ - c_r + \tau_g > - c_d - \tau_l$, thus, the best response of the attacker to $\textbf{V1}$ is $\textbf{A4}$. If $v < r_f$ or $r_{max} < r_f$, the victim is unable or unwilling to pay the ransom. Through backward induction, we conclude that $(\textbf{V2, A7})$ is the SPNE. If $v > r_f$ and $r_{max} > r_f$, the victim is able and willing to pay the ransom, we conclude that $(\textbf{V1, A4})$ is the SPNE. \qed
\end{proof}
    
Note that if the attacker doesn't allow any counteroffer in stage 2, then $r_f$ can only be $r$ or $0$.
From Proposition \ref{thm:ransome 1.0 reputation}, we obtain two salient points: (1) if the attacker wants to get the ransom from a rational victim successfully, it is necessary for them to be open to counteroffers to make sure that $r_f \leq \min(v, r_{max})$; and (2) the maximum ransom an attacker can get is $\min(v, r_{max})$. 

\subsection{Negotiation Dynamics}
In the above formulation, we ignore the immediate financial loss due to downtime
and the rate at which financial losses accumulate over time. 
To this end, we define the total financial loss at any time $t$, denoted as
\begin{equation}
    L(t) = L_0 + \int_{0}^{t} \ell(\delta)\, d \delta + \mathbb{I}(r_f > 0) \cdot r_f + \xi(t) ,
\end{equation}
where $L_0 \in \mathbb{R}_{\geq 0}$ is defined as the immediate financial loss due to downtime. $t \in \mathbb{R}_{>0}$ is defined as the time elapsed since the ransomware attack began. $\ell(\delta)$ is defined as the loss rate at which financial losses accumulate over time with $\int_{0}^{\infty} \ell(\delta)\, d \delta = v$ and $\ell(\delta) \geq 0$ for all $\delta > 0$. $\mathbb{I}(\cdot)$ is a binary indicator function which $\mathbb{I}(r_f > 0)$ indicates whether the ransom is paid. $\xi(t)$ is a nonnegative, non-decreasing stochastic process, modeling unreimbursable or incalculable losses (e.g. reputation losses, legal costs). The non-decreasing and nonnegativity are defined to reflect the irreversible and potentially escalating nature of these intangible damages over time.

We assume that one bargaining round takes a fixed time $T$; in each round, one party—either the attacker or the victim—proposes a ransom, and the opposing party either accepts the offer or rejects it by proposing a counteroffer, thereby initiating a new round. Consequently, one complete back‐and‐forth exchange requires two rounds, and we denote the total number of rounds by $N$. In the simplest case, if the victim accepts the attacker's initial demand, the negotiation concludes in round $1$. We assume that there is an efficient reputation system and that the attacker will always decrypt the data immediately if the victim pays the ransom. Under these assumptions, we can rationally model the dynamic evolution of the actual value of the encrypted data over time, which is
\begin{equation}
    v(n) = \int_{0}^{\infty} \ell(\delta)\, d \delta - \int_{0}^{n T} \ell(\delta)\, d \delta.
    \label{eq:v_n}
\end{equation} 
We can observe that the actual value of $v$ will gradually decrease with the increase of bargaining rounds $n$. 
The maximum ransom a victim can pay after $n$ rounds of bargaining is $\min\{r_{max}, v(n)\}$, which is the victim's reservation value. $r_{min}$ denotes the attackers' minimum acceptable ransom price where $r_{min} \gg c_r$.  

\begin{figure}[!t]
    \centering
    \includegraphics[width=1\linewidth]{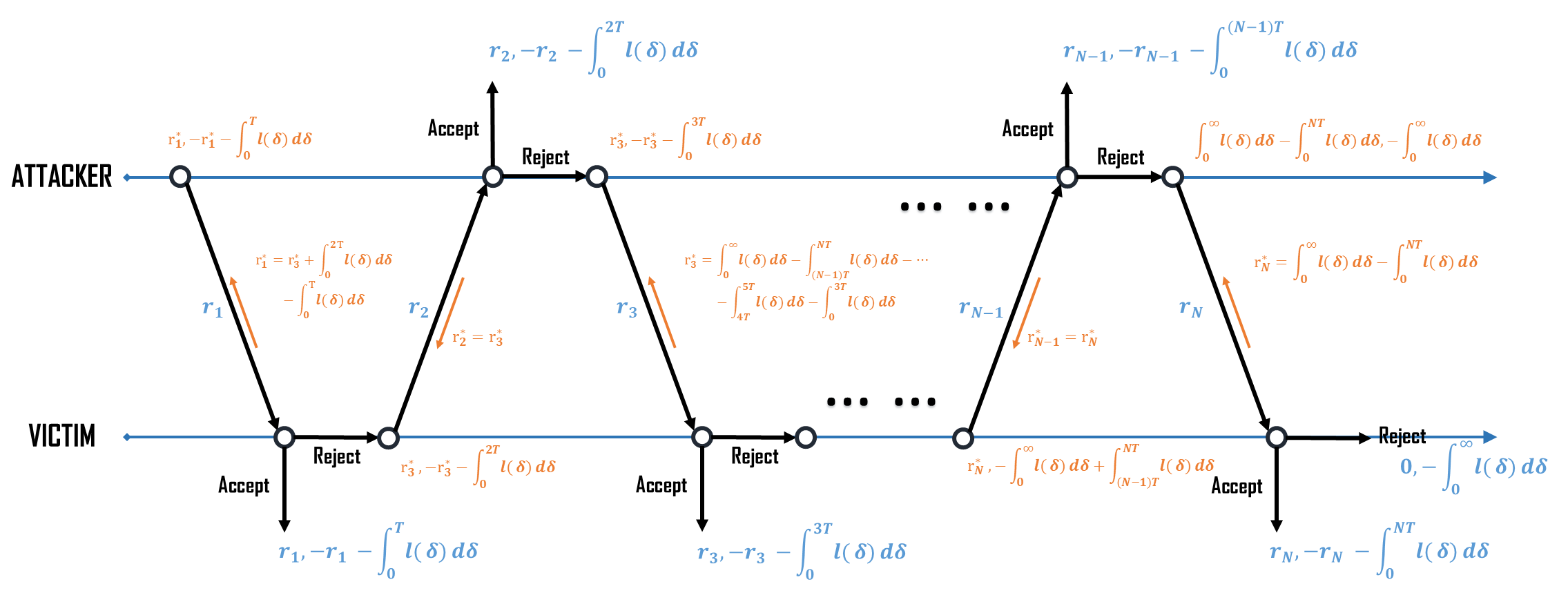}
    \caption{$N$ Rounds Alternating-offers Bargaining Game and Backward Induction}
    \label{fig:bargaining}
\end{figure}

\begin{remark}
    Given that $\xi(t)$ is non-decreasing and the reputation system is efficient, the victim’s optimal strategy is to conclude the negotiation as quickly as possible at the lowest acceptable ransom $r_f$.
\end{remark}

The negotiation dynamics can be modeled as an $N$ rounds alternating-offers bargaining game as shown in Figure \ref{fig:bargaining}. The victim's payoff function is $- L(t_f)$, where $t_f$ denotes the time from the start of the ransomware attack to the time the attacker decrypts the data. The attacker's payoff is $r_f$. Any negotiated agreement within the range $\min\{r_{max},v(n)\}$ and $r_{min}$ at each round $n$ is preferable to no agreement at the end.
If no deal is reached, the attacker’s payoff is $0$ and the victim suffers an accumulated loss of $L(\infty)$. During bargaining analysis, we treat the ransom price and the victim's loss accumulation rate as the only variable parameters. Both agents aim to maximize their payoffs. 
Offers proposed in odd rounds are proposed by the attacker, and offers proposed in even rounds are proposed by the victim. Moreover, $r_{2k} < r_{2k-1}$ and $r_{2k} < r_{2k+1}$ for all $k \in \{1, \dots, \lceil \frac{N-1}{2} \rceil\}$, as otherwise, one party will directly accept the offer.

\subsection*{Complete Information} 
We first assume that both the reservation values of the attacker and the victim are complete information. This assumption, while a simplification of reality, can be justified as follows: $r_{min}$ can be inferred from historical incidents involving the same ransomware group, and $v$ and $\ell(\delta)$ can be estimated based on publicly available financial statements or confidential financial information stolen by the attacker.
We also assume that the marginal loss incurred in the bargaining round is negligible compared to the cumulative future losses for all $n$, i.e. $\int_{(N-1)T}^{NT} \ell(\delta)\, d \delta \ll  \int_{(N+1)T}^{\infty} \ell(\delta)\, d \delta$. Suppose $N$ is the last bargaining round between the attacker and the victim. Then, $r_{min} < v(N)$ and $r_{min} > v(N+1)$, $N \in \mathbb{N}_{\geq 0}$. Without loss of generality, we assume $r_{max} \geq v(1)$.

\begin{proposition}
    If the reservation values are complete information, the subgame perfect equilibrium (the optimal offer given by the attacker and the victim in each round $n$ where $n \in \{1, \dots, N\}$ and $N$ is an odd positive integer) is \begin{equation}
        r^*_n = \int_{0}^{\infty} \ell(\delta)\, d \delta - \sum_{k = 0}^{\lfloor\frac{N-n}{2} -1 \rfloor} \int_{(N-1-2k)T}^{(N-2k)T} \ell(\delta)\, d \delta -  \int_{0}^{(2 \lfloor \frac{n}{2} \rfloor +1) T} \ell(\delta)\, d \delta .
        \label{eq:opt_bargain}
    \end{equation}
    \label{pro:opt_bargain}
\end{proposition}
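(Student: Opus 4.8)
The plan is to prove \eqref{eq:opt_bargain} by backward induction on the bargaining round, invoking the one-shot deviation principle to reduce subgame perfection to round-by-round optimality. Throughout I would exploit the structural asymmetry that drives the recursion: the attacker's payoff is simply the agreed ransom $r_f$ and carries no time cost, whereas the victim's payoff $-L(t_f)$ strictly worsens with each additional round through the accumulating term $\int_{0}^{nT}\ell(\delta)\,d\delta$ (and the non-decreasing $\xi$). For the base case, at the terminal round $N$ (odd, so the attacker proposes) rejection ends the game and leaves the victim with the maximal loss $L(\infty)$; hence the victim accepts any offer up to her reservation value $v(N)$, and since $r_{min}<v(N)$ the attacker optimally extracts all of it, giving $r_N^{*}=v(N)=\int_0^\infty \ell(\delta)\,d\delta-\int_0^{NT}\ell(\delta)\,d\delta$. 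This already matches \eqref{eq:opt_bargain} at $n=N$, where the sum is empty and $2\lfloor N/2\rfloor+1=N$.

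For the inductive step I would read off two indifference conditions according to the parity of $n$. When $n$ is odd (attacker proposes, victim responds), the victim compares accepting $r_n$ now against rejecting and settling one round later at $r_{n+1}^{*}$; since waiting costs her exactly the extra accumulated loss over round $n$, indifference yields
\begin{equation}
    r_n^{*} = r_{n+1}^{*} + \int_{nT}^{(n+1)T}\ell(\delta)\,d\delta .
    \label{eq:recursion-odd}
\end{equation}
When $n$ is even (victim proposes, attacker responds), the attacker has no time cost and accepts iff the offer is at least his continuation value $r_{n+1}^{*}$, so the victim, who strictly prefers to settle sooner, offers exactly $r_n^{*}=r_{n+1}^{*}$. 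Unrolling these two relations from $N$ downward, every descent by two rounds contributes a single unit-interval integral, and the integrals so generated are precisely those with odd left endpoints; these cancel the odd-indexed terms in the expansion of $v(N)=\int_0^\infty\ell-\sum_{i=0}^{N-1}\int_{iT}^{(i+1)T}\ell$, which is the mechanism that produces the ``every other interval'' sum in \eqref{eq:opt_bargain}.

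The remaining---and most delicate---work is the index bookkeeping that matches the telescoped recursion to the stated closed form: after cancellation one must show that exactly the even-indexed unit intervals on $[0,NT]$ survive as the explicit sum $\sum_{k}\int_{(N-1-2k)T}^{(N-2k)T}\ell$, while the odd-indexed intervals below round $n$ are absorbed into $\int_0^{(2\lfloor n/2\rfloor+1)T}\ell$. I expect the parity case-split (odd versus even $n$, with $N$ odd fixed) to be the main source of friction, since the floor functions and the shifted upper limit $\lfloor\tfrac{N-n}{2}\rfloor-1$ must align exactly with the intervals emitted by \eqref{eq:recursion-odd}; verifying this by resolving $2\lfloor n/2\rfloor+1$ case-wise and re-indexing via $k$ should close the argument. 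Finally, to confirm that these offers form an SPNE rather than merely satisfy the indifference equations, I would check feasibility and individual rationality at every round: because each $r_n^{*}$ exceeds $v(N)$ only by a sum of nonnegative integrals contained in $\int_{nT}^{NT}\ell$, one obtains $r_{min}<v(N)\le r_n^{*}\le v(n)$, so the responder can always afford the offer and strictly prefers agreement to disagreement, ruling out any profitable one-shot deviation.
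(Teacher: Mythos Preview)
Your proposal is correct and follows essentially the same backward-induction route as the paper: anchoring at $r_N^{*}=v(N)$, then stepping down via the two parity recursions $r_n^{*}=r_{n+1}^{*}+\int_{nT}^{(n+1)T}\ell$ (odd $n$) and $r_n^{*}=r_{n+1}^{*}$ (even $n$), and matching the telescoped result to the closed form. Your treatment is in fact more careful than the paper's brief sketch---you make the indifference conditions and the feasibility check $r_{\min}<v(N)\le r_n^{*}\le v(n)$ explicit---while the paper additionally spends a paragraph arguing why the terminal round $N$ cannot be even, a point you simply take as a hypothesis of the proposition.
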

\begin{proof}
    Given that $r^*_N = \int_{0}^{\infty} \ell(\delta)\, d \delta - \int_{0}^{N T} \ell(\delta)\, d \delta$, based on the backward induction in Figure~\ref{fig:bargaining} (orange part), we obtain $r^*_{N-1} = r^*_N$; $r^*_{N-2} = \int_{0}^{\infty} \ell(\delta)\, d \delta - \int_{(N-1)T}^{N T} \ell(\delta)\, d \delta -  \int_{0}^{(N-2)T} \ell(\delta)\, d \delta$; ...; 
    $r^*_2 = r^*_3$;
    $r^*_1 =  \int_{0}^{\infty} \ell(\delta)\, d \delta - \int_{(N-1)T}^{N T} \ell(\delta)\, d \delta - \dots - \int_{2T}^{3 T} \ell(\delta)\, d \delta - \int_{0}^{T} \ell(\delta)\, d \delta$, which can be encoded into the closed‐form in (\ref{eq:opt_bargain}). 
     
    If $N$ is an even positive integer, as through similar backward induction, it follows that $r^*_N = 0$ and $r^*_{N-1} = \int_{(N-1)T}^{NT} \ell(\delta)\, d \delta$. Given that $r_{min} > v(N+1) = \int_{(N+1)T}^{\infty} \ell(\delta)\, d \delta \gg \int_{(N-1)T}^{NT} \ell(\delta)\, d \delta$ based on the definition of $\ell(\delta)$, we obtain that $r^*_N \leq r^*_{N-1} < r_{min}$. Because the attacker will never propose a ransom less than $r_{min}$, this creates a contradiction. 
    \qed
\end{proof}

In practice, the victim can always propose a counteroffer back in round $N+1$, but because $r_{N+1} < r_N$ and $r_{min} > v(N+1)$, the attacker will always reject and end the negotiation.
As the victim will never accept if the demand is greater than its reservation value, we define the optimal offer function $R: \{(n,N)| 1 \leq n \leq N \} \longrightarrow \mathbb{R}$, mapping each pair $(n,N)$ to a real‐valued offer: \begin{equation}
    R(n,N) = r^*_{n|n\in\{1,\dots,N\}} .
\end{equation}
\begin{lemma}
    The optimal offer function $R$ is monotonically non-increasing. That is $R(k+1,N) \leq R(k,N)$ for all $k \in \{1, \dots, N-1\}$ and $R(n,N+2) \leq R(n,N)$ for all odd integer $N$ and all $n \in \{1, \dots, N\}$.
    \label{lem:non-increasing}
\end{lemma}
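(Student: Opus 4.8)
The plan is to work directly from the closed form (\ref{eq:opt_bargain}) and to lean on the single structural fact that drives both inequalities: since $\ell(\delta)\ge 0$, every integral over a nonempty interval is nonnegative. I would first introduce shorthand $I_{a,b}:=\int_{aT}^{bT}\ell(\delta)\,d\delta\ge 0$ (for $a\le b$) and $V:=\int_0^\infty \ell(\delta)\,d\delta$, and write $r^*_n[M]$ for the optimal offer in round $n$ when the total number of rounds is $M$, so that (\ref{eq:opt_bargain}) reads $r^*_n[N]=V-\sum_{k=0}^{\lfloor (N-n)/2\rfloor-1} I_{N-1-2k,\,N-2k}-I_{0,\,2\lfloor n/2\rfloor+1}$. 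Both claims then reduce to showing that each unit increment (in $n$, or in $N$) subtracts a nonnegative quantity.

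For the first claim, $R(k+1,N)\le R(k,N)$, I would split on the parity of $k$. When $k$ is even (so $k+1$ is odd), the term $I_{0,\,2\lfloor k/2\rfloor+1}$ is unchanged, and because $N$ is odd the floor expressions $\lfloor (N-k)/2\rfloor$ and $\lfloor (N-k-1)/2\rfloor$ coincide, so the two summations are term-for-term identical; hence $r^*_{k+1}[N]=r^*_k[N]$ (this recovers the equalities $r^*_{N-1}=r^*_N$, $r^*_2=r^*_3$ already observed in the proof of Proposition~\ref{pro:opt_bargain}). When $k$ is odd, the final integral grows from $I_{0,k}$ to $I_{0,k+2}$ while the summation loses exactly its last term $I_{k+1,k+2}$; collecting these changes and using additivity $I_{k,k+2}=I_{k,k+1}+I_{k+1,k+2}$ collapses the difference to $r^*_{k+1}[N]=r^*_k[N]-I_{k,k+1}\le r^*_k[N]$. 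Either way the offer is non-increasing in $n$.

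For the second claim, $R(n,N+2)\le R(n,N)$ with both $N$ and $N+2$ odd, I would observe that the last term $I_{0,\,2\lfloor n/2\rfloor+1}$ depends only on $n$ and is therefore identical in both expressions, so only the summations need comparison. Replacing $N$ by $N+2$ raises the upper floor limit by exactly one and shifts each summand, so a reindexing $k\mapsto k+1$ matches the tail of the $N+2$ sum term-for-term with the whole $N$ sum, leaving a single extra leading term $I_{N+1,N+2}$. This yields $r^*_n[N+2]=r^*_n[N]-I_{N+1,N+2}\le r^*_n[N]$, as required (and $n\le N<N+2$ guarantees both offers are defined).

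The computations are elementary, so the only genuine obstacle is the bookkeeping with the floor functions: one must check carefully that the parity of $N$ (odd) together with the parity of $n$ makes the summation limits behave exactly as claimed, and that the reindexing aligns the two families of width-one integrals with no off-by-one error. I would defuse this by tabulating the first few indices $k=0,1,2,\dots$ explicitly---mirroring how the proof of Proposition~\ref{pro:opt_bargain} spells out $r^*_1$ and $r^*_2$---to confirm the index shift concretely before asserting it in general.
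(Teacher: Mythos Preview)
Your proposal is correct and follows the same overall line as the paper's own proof: a parity split on $k$ for the first inequality (with the even case yielding equality and the odd case yielding a negative integral), and a direct term-by-term comparison of the two summations for the second inequality. In fact, your bookkeeping is tighter than the paper's: in the odd-$k$ case the paper computes the difference as $-\int_{kT}^{(k+2)T}\ell(\delta)\,d\delta$ by tracking only the final integral term, whereas you correctly also account for the dropped summand $I_{k+1,k+2}$ and obtain the sharper $r^*_{k+1}-r^*_k=-I_{k,k+1}$; similarly, for the second claim the paper simply asserts the summation inequality while you justify it by the reindexing $k\mapsto k+1$ and identify the exact surplus term $I_{N+1,N+2}$.
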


\begin{proof}
    Suppose that, contrary to the Lemma, there exists a $R(k+1,N) > R(k,N)$. Based on the definition of $R$ and (\ref{eq:opt_bargain}), we know $R(k+1,N) - R(k,N) = r^*_{k+1}-r^*_{k},$ if $k$ is an even number, $r^*_{k+1}-r^*_{k} = 0$, contradicts, if $k$ is odd,
    \begin{equation*}
       r^*_{k+1}-r^*_{k} = - \int_{0}^{(k+2)T} \ell(\delta)\, d \delta + \int_{0}^{kT} \ell(\delta)\, d \delta = - \int_{kT}^{(k+2)T} \ell(\delta)\, d \delta > 0,
    \end{equation*}
    contradicts the definition of $\ell(\delta)$.
    
    Given (\ref{eq:opt_bargain}), as 
    \[\sum_{k = 0}^{\lfloor\frac{N-n}{2} -1 \rfloor} \int_{(N-1-2k)T}^{(N-2k)T} \ell(\delta)\, d \delta  \leq \sum_{k = 0}^{\lfloor\frac{N+2-n}{2} -1 \rfloor} \int_{(N+1-2k)T}^{(N+2-2k)T} \ell(\delta)\, d \delta , \]
    then, $ R(n,N+2) \leq R(n, N+1) = R(n,N)$.
    \qed
\end{proof}

\begin{lemma}
   $R(n,N) \leq v(n)$ for all odd integer $N$ and $n \in \{1, \dots, N\}$. 
   \label{lem:boundness}
\end{lemma}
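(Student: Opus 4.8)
The plan is to establish the bound by a direct computation that compares the closed form for $R(n,N)=r^*_n$ given in~(\ref{eq:opt_bargain}) against the definition of the victim's reservation value $v(n)$ in~(\ref{eq:v_n}). First I would form the difference $v(n)-R(n,N)$ and cancel the common term $\int_0^\infty \ell(\delta)\,d\delta$ that appears in both expressions. Writing $S$ for the summation $\sum_{k=0}^{\lfloor (N-n)/2-1\rfloor}\int_{(N-1-2k)T}^{(N-2k)T}\ell(\delta)\,d\delta$, this cancellation leaves
\[
v(n)-R(n,N)=S+\Bigl(\int_0^{(2\lfloor n/2\rfloor+1)T}\ell(\delta)\,d\delta-\int_0^{nT}\ell(\delta)\,d\delta\Bigr)=S+\int_{nT}^{(2\lfloor n/2\rfloor+1)T}\ell(\delta)\,d\delta,
\]
so that the claim reduces to showing that both of the displayed terms are nonnegative.

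The second step is to dispatch each term using only the standing hypothesis that $\ell(\delta)\ge 0$ for all $\delta$. The summation $S$ is a sum of integrals of a nonnegative integrand over disjoint intervals, hence $S\ge 0$ immediately. For the remaining integral I would verify that its upper limit dominates its lower limit, i.e.\ that $2\lfloor n/2\rfloor+1\ge n$, by a short parity argument: when $n$ is even, $2\lfloor n/2\rfloor+1=n+1\ge n$, and when $n$ is odd, $2\lfloor n/2\rfloor+1=n$, giving equality. In either case the interval $[nT,(2\lfloor n/2\rfloor+1)T]$ is well defined (possibly degenerate), so nonnegativity of $\ell$ forces the integral to be $\ge 0$. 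Combining the two pieces yields $v(n)-R(n,N)\ge 0$, which is exactly the assertion $R(n,N)\le v(n)$.

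As a consistency check I would note the extremal case $n=N$: there the summation range is empty and the parity argument gives equality, so $R(N,N)=v(N)$, matching the identity $r^*_N=\int_0^\infty \ell-\int_0^{NT}\ell=v(N)$ used to seed the backward induction in Proposition~\ref{pro:opt_bargain}. This confirms that the bound is tight at the final round and loosens as $n$ decreases, which is also consistent with the monotonicity already recorded in Lemma~\ref{lem:non-increasing}.

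The main obstacle here is not analytic but purely a matter of careful bookkeeping: handling the floor functions and the empty-sum boundary cases $n=N-1$ and $n=N$, where $\lfloor (N-n)/2-1\rfloor<0$. I would address these by adopting the standard convention that an empty sum equals zero and by confirming that the parity argument for the $\int_{nT}^{(2\lfloor n/2\rfloor+1)T}$ term still closes the bound in those corner cases. Once the index ranges and floor values are pinned down for both parities of $n$, the inequality follows with no further estimation, since every contribution to $v(n)-R(n,N)$ is manifestly a nonnegative integral of $\ell$.
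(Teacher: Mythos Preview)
Your proposal is correct and follows essentially the same route as the paper: both arguments reduce the bound to the two facts that $\ell(\delta)\ge 0$ and $(2\lfloor n/2\rfloor+1)T\ge nT$, yielding the inequality $S+\int_0^{(2\lfloor n/2\rfloor+1)T}\ell\ge\int_0^{nT}\ell$. Your version is simply more explicit about the parity check, the empty-sum convention at $n\in\{N-1,N\}$, and the tightness at $n=N$, but the underlying mechanism is identical.
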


\begin{proof}
    Given (\ref{eq:opt_bargain}), as $(2 \lfloor \frac{n}{2} \rfloor +1) T \geq nT$ and $\ell(\delta) \geq 0$, then, 
    \[ \sum_{k = 0}^{\lfloor\frac{N-n}{2} -1 \rfloor} \int_{(N-1-2k)T}^{(N-2k)T} \ell(\delta)\, d \delta +  \int_{0}^{(2 \lfloor \frac{n}{2} \rfloor +1) T} \ell(\delta)\, d \delta \geq \int_{0}^{nT} \ell(\delta)\, d \delta .\]
    \qed
\end{proof}

\begin{remark}
    Given that the reservation values are complete information and $0 < r_{min} \leq r_{max}$, the attacker can get a ransom in round $n$ if and only if $r_{min} \leq R(n,N)$. The possible maximum ransom the attacker can get is $R(1,N)$.
    \label{lem:attacker_min}
\end{remark}

\vspace{-1em}
Therefore, if the attacker wants to get any ransom eventually, either the attacker needs to accept an offer proposed by the victim before round $N$, where $N$ is an odd integer ($r_{min} \leq R(N,N)$ and $r_{min} > R(N+2, N+2)$), or the attacker proposes an offer no larger than $R(2k+1,N)$ in round $2k+1$ where $k \in \{0, \dots, \lfloor \frac{N-1}{2} \rfloor\}$, since otherwise, the victim will never accept it. The optimal strategy profile is for the attacker to propose $R(1,N)$ in round 1, and for the victim to accept it in round 2.

\paragraph{Special Cases}
In some ransomware settings, there is no loss accumulation rate over time but rather a fixed loss if the data or files are never decrypted; in that case, the total cost remains constant unless a ransom $r_f$ is paid. We define
\[L = L_0 + \mathbb{I}(a^A \notin \{\textbf{A4},\textbf{A6}\}) \cdot v + \mathbb{I}(r_f > 0) \cdot r_f,\]
where $a^A$ is the final action of the attacker. Suppose the attacker's reservation value is $r_{min}$ where $r_{min} \gg c$. If $r_{min} > \min \{v, r_{max}\}$, the victim will never pay the ransom. 
When $r_{min} \leq \min \{v, r_{max}\}$ and both sides' reservation values are complete information, based on the unique equilibrium of the infinite‐horizon alternating‐offers bargaining game in \cite{rubinstein1982perfect} (the discount factors $\gamma^A = \gamma^V \rightarrow 1$), the attacker and the victim agree to set the ransom:
\[
r_f     \;=\; 
\frac{\min\{v,\,r_{\max}\} + r_{\min}}{\,2\,}.
\]

\subsection*{Incomplete Information - Private Reservation Values}     
Suppose $r_{min}$ is sufficiently small, which means the attacker can bargain with the victim forever, then $N \rightarrow \infty$. 
Suppose the loss rate $\ell(\delta)$ varies much more slowly than the window‑length $T$, which means for all $n \in \mathbb{N}_{\geq 0}$ each block integral $\int_{nT}^{(n+1)T} \ell(\delta) \, d\delta$ changes negligibly from one block to the next, it follows that
\begin{align*}
    r_1^* &= \lim_{N \rightarrow \infty} \int_{0}^{\infty} \ell(\delta)\, d \delta - \sum_{k = 0}^{\lfloor\frac{N-3}{2}\rfloor} \int_{(N-1-2k)T}^{(N-2k)T} \ell(\delta)\, d \delta -  \int_{0}^{T} \ell(\delta)\, d \delta \\ &= \sum_{m=0}^{\infty} \int_{(2m+1)T}^{(2m+2)T} \ell(\delta)\, d \delta \approx \frac{1}{2}\int_{0}^{\infty} \ell(\delta)\, d \delta.
\end{align*}
From Lemma~\ref{lem:non-increasing}, $R(1,\infty) = \min_{N\geq 1}R(1,N)$, the minimum ransom the victim pays at round $1$ is $\min \big\{ \frac{1}{2}\int_{0}^{\infty} \ell(\delta)\, d \delta, \, r_{max} \big\} $. Consequently, if the attacker’s opening demand falls below this threshold, a good strategy for the victim is to accept it immediately to avoid additional bargaining rounds. Given that the attacker may salt prices to secure a better offer, when designing the victim's bargaining strategy, the victim needs to encourage the attacker to compromise as quickly as possible while ensuring that any attacker's deviation behavior from truthful play cannot improve her expected payoff. We define the truthful play of the attacker as \textit{the attacker accepting the offer $r_n$ in the round $n$ whenever $r_{min} \leq r_n$.}
 
\begin{proposition}
    Suppose that the reservation values are incomplete information.
    Without loss of generality, we assume that $r_{max} \geq v(1)$. We denote $\tilde{r}_2 = q \cdot\int_{0}^{\infty} \ell(\delta)\, d \delta - \int_{0}^{2T} \ell(\delta)\, d \delta$.
    Given the strategy of the victim in round $2$ and $4$ as
    \begin{align*}
        a^V_2 &= \begin{cases}
            \textbf{V1} \quad \quad \quad \quad \quad \quad \quad \quad \quad \quad \quad \ &\text{if } r_1 \leq \frac{1}{2}\int_{0}^{\infty} \ell(\delta)\, d \delta, \\
            \textbf{V3} \ \& \ r_2 = \tilde{r}_2   &\text{otherwise with prob } \overline{p}, \\
            \textbf{V3} \ \& \ r_2 = \int_{2T}^{\infty} \ell(\delta)\, d \delta &\text{otherwise with prob } 1-\overline{p} ,
        \end{cases} \\
        a^V_{4} &= \begin{cases}
            \textbf{V1} \quad \quad \quad \quad \quad \quad \quad \quad \quad \quad \quad  \ &\text{if } r_{3} \leq \int_{3T}^{\infty} \ell(\delta)\, d \delta \text{ and with prob }\rho,\\
            \textbf{V2} &\text{otherwise}, \\
        \end{cases} 
    \end{align*}
    where $\frac{\int_{0}^{2T} \ell(\delta)\, d \delta}{\int_{0}^{\infty} \ell(\delta)\, d \delta} \leq q \leq \frac{\int_{0}^{2T} \ell(\delta)\, d \delta}{\int_{0}^{3T} \ell(\delta)\, d \delta}$, 
    $ \max \big\{0, \frac{q \cdot \int_{0}^{\infty}\ell(\delta)\, d \delta - \int_{0}^{2T}\ell(\delta)\, d \delta}{\int_{3T}^{\infty}\ell(\delta)\, d \delta} \big\} \leq \rho \leq 1 $, and \\
    $\frac{\int_{2T}^{\infty} \ell(\delta)\, d \delta}{\rho\cdot \int_{3T}^{\infty} \ell(\delta)\, d \delta + (1-q) \cdot \int_{0}^{\infty} \ell(\delta)\, d \delta} \leq \overline{p} \leq 1$,
    the best response of attacker in round $3$ is 
    \begin{align*}
        a^A_{3} &= \begin{cases}
            \textbf{A1}  \quad \quad \quad \quad \quad \quad \quad \quad \quad \quad \quad \ &\text{if } r_{min} \leq r_{2} , \quad \quad \quad \quad \quad \quad \quad \quad \quad \ \\
            \textbf{A3} \ \& \ r_{3} = \max\{\frac{r_2}{q}, r_{min}\} &\text{otherwise. } \\
        \end{cases} 
    \end{align*}
    \label{pro:opt_bargian_incomplete}
\end{proposition}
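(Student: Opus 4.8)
The plan is to verify the claimed strategy profile as a best response by backward induction from round $4$, treating the victim's stated strategy as a commitment and exploiting the key feature of the incomplete-information setting: the attacker observes the counteroffer $r_2$ and knows the public discount $q$, but does \emph{not} know the threshold $v(3)=\int_{3T}^\infty \ell(\delta)\,d\delta$. First I would record the continuation value of each round-$3$ action. Accepting yields $r_2$, and this is admissible only when $r_2\geq r_{min}$, since $r_{min}$ is a hard reservation below which the attacker strictly prefers no deal. Countering with $r_3$ leads to round $4$, where by $a^V_4$ the victim pays with probability $\rho$ exactly when $r_3\leq v(3)$ and otherwise plays \textbf{V2} (terminating with attacker payoff $0$); hence the expected value of a counteroffer $r_3$ is $\rho\, r_3$ if $r_3\leq v(3)$ and $0$ otherwise.

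The second step is a ``safe-offer'' bound. Substituting $r_2=\tilde r_2$ gives $\tilde r_2/q=\int_0^\infty \ell(\delta)\,d\delta-\tfrac1q\int_0^{2T}\ell(\delta)\,d\delta$, and the stated upper bound $q\leq \frac{\int_0^{2T}\ell(\delta)\,d\delta}{\int_0^{3T}\ell(\delta)\,d\delta}$ is algebraically equivalent to $\tilde r_2/q\leq v(3)$. Thus $r_3=\max\{r_2/q,\,r_{min}\}$ is the largest counteroffer the attacker can make that is \emph{guaranteed} to be accepted (with probability $\rho$), while the floor $r_{min}$ records that she never demands below her own reservation. Because the accepted amount $\rho r_3$ is increasing in $r_3$, every counteroffer in $[\,r_{min},\,\tilde r_2/q\,]$ is dominated by $r_3=\max\{r_2/q,r_{min}\}$, giving continuation value $\rho\max\{r_2/q,r_{min}\}$.

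Third, I would pin down the accept/counter cutoff at $r_{min}$. When $r_2<r_{min}$ the attacker cannot accept, so she counters with the dominant safe offer $\max\{r_2/q,r_{min}\}$, matching the second line of $a^A_3$. When $r_2\geq r_{min}$, acceptance is admissible and I must show it weakly beats countering, i.e.\ $r_2\geq \rho\, r_2/q$; this is where the bound on $\rho$ enters, since the stated lower bound $\rho\geq \tilde r_2/v(3)$ together with the upper bound on $q$ is precisely the algebraic content relating $\rho$ and $q$, and the victim selects $\rho$ in the admissible interval so that accepting is optimal. The separate $v(2)$-branch is immediate: since $q\leq 1$ forces $r_2/q\geq v(2)>v(3)$ and every valid counteroffer satisfies $r_3>r_2=v(2)$, all counteroffers exceed $v(3)$ and are rejected, so the attacker accepts iff $v(2)\geq r_{min}$, again matching $a^A_3$.

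The main obstacle is the remaining incomplete-information step: ruling out a profitable \emph{over-demand}, i.e.\ a gamble $r_3>\tilde r_2/q$ that may still satisfy $r_3\leq v(3)$ for the realized type. Here the attacker maximizes $\rho\, r_3\,\Pr[\,r_3\leq v(3)\mid r_2=\tilde r_2\,]$ against her posterior, which is shaped by the victim's round-$2$ mixing probability $\overline p$ and by the slowly-varying assumption $\int_{nT}^{(n+1)T}\ell(\delta)\,d\delta\approx\text{const}$. I expect the lower bound on $\overline p$, namely $\overline p\geq \frac{\int_{2T}^\infty\ell(\delta)\,d\delta}{\rho\int_{3T}^\infty\ell(\delta)\,d\delta+(1-q)\int_0^\infty\ell(\delta)\,d\delta}$, to be exactly the incentive constraint that renders the expected gain from any such gamble nonpositive, so that the safe offer $\tilde r_2/q$ stays optimal. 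Verifying that the three bounds on $q$, $\rho$, and $\overline p$ are jointly sufficient---and interpreting each as the constraint that blocks one deviation (over-demanding, refusing a fair offer, and under-accepting)---is the crux of the argument.
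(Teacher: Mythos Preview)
Your decomposition treats the two realizations $r_2=\tilde r_2$ and $r_2=v(2)$ as if the attacker could distinguish them and analyze each branch pointwise. On the $\tilde r_2$-branch your acceptance criterion in step~3 is $r_2\ge \rho\,r_2/q$, i.e.\ $\rho\le q$; but the proposition allows any $\rho\in[\tilde r_2/v(3),\,1]$ while $q<1$, so for $\rho$ near $1$ the safe counteroffer $r_2/q$ strictly beats accepting in your framework and the claim fails. The appeal to ``the victim selects $\rho$ in the admissible interval'' does not help, since the best-response assertion must hold for \emph{every} admissible triple $(q,\rho,\overline p)$, not just a convenient one.

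The paper's argument is structurally different and hinges on the feature you defer to step~4: under incomplete information the attacker does not know $\int_0^\infty\ell$ or $\int_0^{2T}\ell$, so upon observing the number $r_2$ she cannot tell which branch of $a^V_2$ produced it. The comparison is therefore taken ex ante over the victim's randomization: truthful play (accept whenever $r_{min}\le r_2$) yields $\overline p\,\tilde r_2+(1-\overline p)\,v(2)$, whereas any counteroffer yields $0$ on the event $r_2=v(2)$ (since then $r_3>v(2)>v(3)$ is always rejected in round~4) and at most $\rho\,v(3)$ on the event $r_2=\tilde r_2$, hence at most $\overline p\,\rho\,v(3)$ overall. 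The stated bound on $\overline p$ is exactly this single incentive inequality, and the lower bound on $\rho$ is what makes that $\overline p$-interval nonempty (it is algebraically equivalent to the $\overline p$-threshold being at most $1$). In other words, $\overline p$ governs the basic accept-versus-counter decision, not merely the ``over-demand'' gamble you isolate in step~4; without folding the $v(2)$-branch penalty into the main comparison via $\overline p$, the argument cannot close.
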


\begin{proof}
When $r_2 = \tilde{r}_2$ and $r_{min} \leq \tilde{r}_2$, the expected payoff of the attacker to act truthfully in round $3$ is
\begin{align*}
    \mathbb{E}\Big[u^A_3\Big(\textbf{A1} \Big| r_2 = \tilde{r}_2 \geq r_{min} \Big) \Big]
    = (1 - \overline{p} + \overline{p}\,q) \, \int_{2T}^{\infty} \ell(\delta)\, d \delta + \overline{p} \,(q-1) \,\int_{0}^{2T} \ell(\delta)\, d \delta .
\end{align*}
If the attacker chooses $\textbf{A3}$ (to make a counteroffer), the offer $r_3$ given by the attacker should be larger than $r_2$. 
As $r_2$ has $1-\overline{p}$ probability to be $\int_{2T}^{\infty} \ell(\delta)\, d \delta $, if so and the attacker acts untruthfully, $\mathbb{E}[ u^A_3(\textbf{A3}|r_2 = \int_{2T}^{\infty} \ell(\delta)\, d \delta)] = 0$. When $r_2 = \tilde{r}_2$, given that the victim will accept any offer below $\int_{3T}^{\infty} \ell(\delta)\, d \delta$ in round $4$, then the maximum possible ransom the attacker can get is $\int_{3T}^{\infty} \ell(\delta)\, d \delta$. Thus, 
\[ \mathbb{E} \Big[u^A_3\Big(\textbf{A3} \Big| r_2 = \tilde{r}_2 \geq r_{min} \Big) \Big] \leq \overline{p} \cdot \rho \cdot \int_{3T}^{\infty} \ell(\delta)\, d \delta.\]
Given that 
\[ 
 \rho \geq \frac{q \cdot \int_{0}^{\infty}\ell(\delta)\, d \delta - \int_{0}^{2T}\ell(\delta)\, d \delta}{\int_{3T}^{\infty}\ell(\delta)\, d \delta}
 \text{ and } 
 \overline{p} \geq  \frac{\int_{2T}^{\infty} \ell(\delta)\, d \delta}{\rho\cdot \int_{3T}^{\infty} \ell(\delta)\, d \delta + (1-q) \cdot \int_{0}^{\infty} \ell(\delta)\, d \delta} ,
 \]
we obtain \[\mathbb{E} \big[u^A_3 (\textbf{A3} | r_2 = \tilde{r}_2 \geq r_{min}) \big] \leq \mathbb{E} \big[u^A_3 (\textbf{A1} | r_2 = \tilde{r}_2 \geq r_{min}) \big].\] 
The best response of the attacker in round $3$ when $r_2 = \tilde{r}_2 \geq r_{min}$ is $\textbf{A1}$ (to accept). 
Thus, if a rational attacker rejects in round $3$, $r_{min} > \tilde{r}_2$. 

Based on Remark~\ref{lem:attacker_min}, the attacker can only get a ransom when $r_{min} \leq v(2) = \int_{2T}^{\infty} \ell(\delta)\, d \delta$ in round $2$. 
When $r_{min} > v(2)$, the attacker's payoff will always be $0$.

When $\tilde{r}_2 < r_{min} \leq \int_{2T}^{\infty} \ell(\delta)\, d \delta = r_2$, the attacker's best response is $\textbf{A1}$, since $r_2 = v(2)$ (which exceeds $v(3)$) is the highest ransom the attacker can obtain.

When $ r_2 = \tilde{r}_2 < r_{min} \leq \int_{2T}^{\infty} \ell(\delta)\, d \delta$, the best response of the attacker is $\textbf{A3}$. Since $\frac{\int_{0}^{2T} \ell(\delta)\, d \delta}{\int_{0}^{\infty} \ell(\delta)\, d \delta} \leq q \leq \frac{\int_{0}^{2T} \ell(\delta)\, d \delta}{\int_{0}^{3T} \ell(\delta)\, d \delta}$, it follows that $\frac{r_2}{q} \leq v(3)$. Therefore, the maximum offer can be proposed based on the attacker's current information is $\frac{r_2}{q}$ and the optimal counteroffer is $ \max\{ \frac{r_2}{q}, r_{min} \}$. Notably, when $v(3) < r_{min} \leq v(2)$, the attacker will not get any ransom in this case.
\qed
\end{proof}

\section{Privacy-preserving Negotiation Mechanism} \label{Sec:Mec}

In this section, under the assumption of private reservation values (incomplete information), we design a novel Bayesian incentive-compatible mechanism to help the victim and the attacker automatically bargain and quickly reach a consensus, while preserving the privacy of both sides through garbled circuits.

\subsection{Mechanism Design}
\label{SubSec:Mec}
Since reaching an agreement quickly benefits both the attacker and the victim, our mechanism design builds on Proposition~\ref{pro:opt_bargian_incomplete}. By implementing it via garbled circuits that directly compute the final ransom, we eliminate negotiation delays and thus adopt a simplified strategy profile based on the proposition.
We assume that $\xi(t)=0$ for all $t$. We denote the victim's reservation value function based on the bargaining time $t$ as 
\begin{equation}
    \psi(t) = \min \bigg\{ \int_{t}^{\infty}\ell(\delta) \, d \delta , \, r_{max} \bigg\} .
\end{equation}

\begin{definition}[Mechanism Design for Ransomware Negotiation]
Consider a negotiation mechanism $\mathcal{M}$ between a victim and an attacker, where each agent reports a type: the victim reports $\hat{\theta}^V \in \mathbb{R}_{\geq 0}$ and the attacker reports $\hat{\theta}^A \in \mathbb{R}_{\geq 0}$,  representing their reservation values.
    Given the victim's strategy $\pi^V$:
     \begin{align*}
        a^V_2 &= \begin{cases}
            \textbf{V3} \ \& \ r_2 = q \cdot \hat{\theta}^V   &\text{with prob } \overline{p}, \\
            \textbf{V3} \ \& \ r_2 = \hat{\theta}^V \quad \quad &\text{with prob } 1-\overline{p} ,
        \end{cases} \\
        a^V_{4} &= \begin{cases}
            \textbf{V1} \ (r_f=r_3) \quad \ \ &\text{if } r_{3} \leq  \hat{\theta}^V \text{ and with prob } q,\\
            \textbf{V2} \ (r_f=0, \sigma=1)  &\text{if } r_{3} \leq  \hat{\theta}^V \text{ and with prob } 1-q,\\
            \textbf{V2} \ (r_f=0) &\text{otherwise}, \\
        \end{cases} 
    \end{align*}
    where $q, \overline{p} \in [0,1]$ and $\overline{p} \cdot (1-q) = \frac{1}{2}$, and the best response of the attacker $\pi^A$:
    \begin{align*}
        a^A_{3} &= \begin{cases}
            \textbf{A1} \ (r_{f}=r_2) &\text{if } \hat{\theta}^A \leq r_{2}, \\
            \textbf{A3} \ \& \ r_{3} = \max\{\frac{r_2}{q}, \hat{\theta}^A\} &\text{otherwise, } \\
        \end{cases} 
    \end{align*}
    where $r_f$ is the final ransom determined by the strategy profile $\boldsymbol{\pi} = (\pi^V, \pi^A)$, we define the \textbf{allocation rule} as:
    \begin{align*}
        \alpha^V = \alpha^A =  
        \begin{cases}
            1 & \text{if } r_f > 0 \text{ or } \sigma=1, \\
            0 & \text{otherwise},
        \end{cases}
    \end{align*}
    and the corresponding \textbf{payment rule} as:
    \[
        \beta^V = \beta^A = r_f .
    \]  
    Then, the mechanism $\mathcal{M}$ maps input types $(\hat{\theta}^V, \hat{\theta}^A)$ to outcome $(\boldsymbol{\alpha}, \boldsymbol{\beta})$. ($\overline{p}$ and $q$ are exogenously chosen “coin‐flip” biases.)

\end{definition}

The mechanism assumes a quasilinear utility, just as other auction mechanisms. A victim with valuation $\theta^V$ receives a utility $u^V(\theta^V,\hat{\theta}^V) = (\alpha^V\cdot \theta^V - \beta^V)$ for reporting type $\hat{\theta}^V$, while an attacker with valuation $\theta^A$ receives a utility $u^A(\theta^A,\hat{\theta}^A) = (\beta^A-\alpha^A\cdot \theta^A) $ for reporting type $\hat{\theta}^A$. 
Notably, the attacker's valuation $\theta^A$ in this mechanism is $r_{min}$, and the victim's valuation $\theta^V$ is $\psi(t)$ at time $t$. We assume a common prior in which $\theta^A$ and $\theta^V$ are drawn independently and uniformly from a common uniform distribution over $[\underline{r}, \overline{r}]$. Each player knows her own type and holds the prior as her belief about the other’s type.

\begin{theorem}
    Suppose $\xi(t)=0$ for all $t$, and attacker's true valuation $\theta^A$ is drawn independently and uniformly from a common prior uniform distribution $F$ over $[\underline{r}, \overline{r}]$, the mechanism $\mathcal{M}$ is Bayesian incentive-compatible.
\end{theorem}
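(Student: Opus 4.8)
The plan is to unpack Bayesian incentive compatibility into its two constituent interim incentive constraints---one for the victim, one for the attacker---and to verify each by computing the interim expected utility as a function of the agent's true type and its report, holding the opponent at truthful play. Throughout, I would condition on the two sources of exogenous randomness hard-wired into the strategy profile: the victim's round-2 coin flip (the offer is the ``lowball'' $q\hat\theta^V$ with probability $\overline p$ and the ``honest'' $\hat\theta^V$ with probability $1-\overline p$) and the round-4 coin flip with bias $q$. Since $\theta^A$ and $\theta^V$ are independent and uniform on $[\underline r,\overline r]$, each interim utility is an integral of a piecewise-linear integrand against the uniform density, so the entire argument reduces to tracking a handful of thresholds and repeatedly invoking the single algebraic identity $\overline p(1-q)=\tfrac12$.

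For the victim I would fix the attacker at truthful play and assemble the interim utility by this branch decomposition, using the fact that a $\sigma=1$ outcome hands the victim the full data value $\theta^V$ at zero payment while a rejected counteroffer gives $0$. After collapsing the branch weights with $\overline p(1-q)=\tfrac12$, I expect the victim's interim utility to reduce to the clean form $F(\hat\theta^V)\,\bigl(\theta^V-\tfrac12\hat\theta^V\bigr)$, where $F$ is the uniform CDF; differentiating in $\hat\theta^V$ and checking the endpoints of $[\underline r,\overline r]$ should then identify truthful reporting as the maximizer. For the attacker I would similarly fix the victim at truthful play and examine how the report $\hat\theta^A$ enters $\pi^A$ in two places at once: the acceptance test $\hat\theta^A\le r_2$ and the counteroffer $r_3=\max\{r_2/q,\hat\theta^A\}$. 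The structural observation I would aim to establish is that, within each coin-flip branch, direct acceptance and ``counter-then-get-paid'' yield the \emph{same} per-trade payoff, so that $\hat\theta^A$ effectively behaves only as a participation threshold; leaning on the best-response analysis of Proposition~\ref{pro:opt_bargian_incomplete} and on the $\sigma$-allocation---which assigns the attacker the virtual payoff $-\theta^A$ whenever the victim walks away after an acceptable counteroffer---I would then argue that any attempt to inflate $\hat\theta^A$ in order to push $r_3$ upward is exactly offset by the induced loss of trade probability, leaving $\hat\theta^A=\theta^A$ optimal.

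The hard part will be the attacker's constraint. Because the garbled-circuit implementation commits the attacker to a single report up front rather than letting her react to the realized $r_2$, I cannot simply cite the sequential best response of Proposition~\ref{pro:opt_bargian_incomplete}; instead I must rule out \emph{joint} deviations that simultaneously retune the acceptance threshold and the counteroffer across both coin-flip branches. This is exactly where the balance condition $\overline p(1-q)=\tfrac12$ and the $\sigma$-penalty must do their work, and the crux of the proof is to show that together they force the marginal value of misreporting to vanish at the truthful report while keeping it a global---not merely a critical---optimum, which I would confirm via a sign analysis of the derivative of the interim utility together with an explicit comparison against the two boundary reports.
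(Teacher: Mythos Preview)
Your plan is essentially the paper's own route: compute the attacker's expected utility branch-by-branch, observe that accepting $r_2$ and countering-then-getting-paid yield the \emph{same} expected per-trade payoff (so $\hat\theta^A$ is only a participation threshold), and for the victim collapse the interim utility to $F(\hat\theta^V)\bigl(\theta^V-\tfrac12\hat\theta^V\bigr)$ and differentiate. Both halves go through.

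One recalibration is worth making. You flag the attacker's constraint as ``the hard part'' and expect the balance condition $\overline p(1-q)=\tfrac12$ to be what makes it work. In fact the attacker's side is the \emph{easier} half and does not use that condition at all: for any fixed victim report $\hat\theta^V$, the lowball-branch indifference between accepting $q\hat\theta^V$ and countering to $r_3=\hat\theta^V$ comes purely from the round-4 coin being biased at $q$ together with the $\sigma$-penalty, giving $q\hat\theta^V-\theta^A$ either way; the honest-branch is trivial since any counter exceeds $\hat\theta^V$ and is rejected. Hence the attacker's expected utility equals $(1-\overline p+\overline p q)\hat\theta^V-\theta^A$ for every $\hat\theta^A\le\hat\theta^V$ and $0$ otherwise, which is a \emph{pointwise} (weakly dominant) argument rather than a Bayesian one --- no integration over $\theta^V$ and no use of $\overline p(1-q)=\tfrac12$. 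The balance condition is needed only on the victim's side, exactly to force the first-order condition $\hat\theta^V=\theta^V$; so your ``hard part'' is actually the victim's derivative calculation, which you already have.
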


\begin{proof}
Given the input $\hat{\theta}^V$ of the victim at time $t$ and input $\hat{\theta}^A$ of the attacker to the mechanism $\mathcal{M}$, the expected utility of the attacker is
\begin{align*}
    & \mathbb{E}\big[ u^A\big(\theta^A, \hat{\theta}^A; \hat{\theta}^V \big)\big] \\ &= \begin{cases}  \overline{p} \, (q\hat{\theta}^V-\theta^A) + (1-\overline{p})\, (\hat{\theta}^V-\theta^A) & \text{if } \hat{\theta}^A \leq q\hat{\theta}^V, \\  
    (1-\overline{p})\, (\hat{\theta}^V-\theta^A) + \overline{p} \,  q \, (\hat{\theta}^V-\theta^A) + \overline{p}\,(1- q)\,(-\theta^A) &\text{if }q\hat{\theta}^V<\hat{\theta}^A \leq \hat{\theta}^V , \\
     0 & \text{otherwise}. \\
    \end{cases}
\end{align*}

Fix the attacker’s true type $\theta^A\ge0$ and an arbitrary victim report $\hat\theta^V\ge0$.  We compare the attacker’s utility when she reports $\hat\theta^A$ to that when she reports truthfully, $\hat\theta^A=\theta^A$.

\smallskip\noindent\textbf{Case 1: \(\hat\theta^V<\theta^A\).}  
Then \(q\hat{\theta}^V<\theta^A\) as well, so in every branch \(q\hat{\theta}^V-\theta^A<0\) and \(\hat\theta^V-\theta^A<0\).  Hence, the expected utility when $\hat{\theta}^A\leq \hat{\theta}^V$ is nonpositive, and the expected utility when $ \hat{\theta}^A > \hat{\theta}^V$ gives zero.  Thus, truthful reporting \(\hat\theta^A=\theta^A\) achieves this maximum.

\smallskip\noindent\textbf{Case 2: \(q\hat{\theta}^V<\theta^A\le\hat\theta^V\).}  
Here \(\hat\theta^V-\theta^A\ge0\) but \(q\hat{\theta}^V-\theta^A<0\). 
\begin{align*}
    \hat\theta^A\le q\hat{\theta}^V
&\;\implies\;
\mathbb{E}[u^A] \;=\; \bigl((1-\overline p)+\overline p\,q\bigr)\,\hat\theta^V-\theta^A, \\
q\hat{\theta}^V<\hat\theta^A\le\hat\theta^V
&\;\implies\;
\mathbb{E}[u^A] \;=\;\bigl((1-\overline p)+\overline p\,q\bigr)\,\hat\theta^V-\theta^A.
\end{align*}
 In either event, the payoff ties the maximum value, while any \(\hat\theta^A>\hat\theta^V\) yields 0.  

\smallskip\noindent\textbf{Case 3: \(\theta^A\le q\hat{\theta}^V\).}  
Now \(\hat\theta^V-\theta^A\ge q\hat{\theta}^V-\theta^A\ge0\), same to Case 2, 
the utility equals 
\(\bigl((1-\overline p)+\overline p\,q\bigr)\,\hat\theta^V-\theta^A\) no matter \(\hat{\theta}^A < q\hat{\theta}^V \) or \(q\hat{\theta}^V<\hat\theta^A\le\hat\theta^V\), while any \(\hat\theta^A>\hat\theta^V\) yields zero.  

In all three cases, \(\hat\theta^A=\theta^A\) weakly dominates any other report.  Hence, truth‐telling is a (weakly) dominant strategy for the attacker.

\bigskip
For all $t$, the interim expected utility of the victim averaged over the distribution $F$ of attacker's report is, 
\begin{align*}
    \mathbb{E}_{\hat{\theta}^A\sim F} \Big[\mathbb{E} \big[ u^V\big({\theta}^V,  \hat{\theta}^V  ; \hat{\theta}^A \big)\big] \Big]
    = \mathbb{P}[\hat{\theta}^A \leq q\hat{\theta}^V]\cdot \big(\overline{p}\, (\theta^V - q\hat{\theta}^V) + (1-\overline{p})\,(\theta^V - \hat{\theta}^V)\big) \\ + \mathbb{P}[q\hat{\theta}^V < \hat{\theta}^A \leq \hat{\theta}^V]\cdot \big((1-\overline{p})\,(\theta^V-\hat{\theta}^V) +\overline{p}\,q \, (\theta^V-\hat{\theta}^V) + \overline{p} \,(1-q) \,\theta^V \big) . 
\end{align*}

Denote by $F(\hat{\theta}^V)$ the cumulative distribution function of the random variable $\hat{\theta}^A$, $F(x) = \mathbb{P}(\hat{\theta}^A \leq x)$. Then,
\begin{align*}
   \mathbb{E}_{\hat{\theta}^A\sim F} \Big[\mathbb{E} \big[ u^V\big({\theta}^V,  \hat{\theta}^V  ; \hat{\theta}^A \big)\big] \Big] = F(q\hat{\theta}^V)\,\overline{p}\,(1-q)\,\theta^V + F(\hat{\theta}^V)\,(1-\overline{p}+\overline{p}\,q)\,\theta^V \\ 
     -F(\hat{\theta}^V)\,(1-\overline{p}+\overline{p}\,q)\,\hat{\theta}^V + \big(F(\hat{\theta}^V) - F(q\hat{\theta}^V) \big) \cdot \overline{p} \,(1-q) \theta^V.
\end{align*} 
Since in equilibrium the attacker reports truthfully, $\hat{\theta}^A = \theta^A$. By assumption, $\theta^A$ is independent of $ (\theta^V, q, \overline{p})$ and uniformly drawn from $F$, which is $\text{Unif}[\underline{r}, \overline{r}]$, by the monotone linear change of variables $U=(\theta^A - \underline{r})/(\overline{r}-\underline{r})$, we may, without loss of generality, normalize the support to $[0,1]$. Under this normalization, $F(u)=u$ and $F'(u) = f(u)=1$ for $u\in[0,1]$. Then, the partial derivative is
\begin{align*}
    \frac{\partial \mathbb{E} [ u^V ]}{\partial \hat{\theta}^V} =& f(q\hat{\theta}^V) \, q \,\overline{p}\,(1-q)\,\theta^V + f(\hat{\theta}^V)\,(1-\overline{p}+\overline{p}\,q)\,\theta^V - f(\hat{\theta}^V)\,(1-\overline{p}+\overline{p}\,q)\,\hat{\theta}^V \\ &- F(\hat{\theta}^V)\,(1-\overline{p}+\overline{p}\,q) + \big( f(\hat{\theta}^V) - f (q\hat{\theta}^V)\, q \, \big) \,\overline{p}\,(1-q)\, \theta^V \\[5pt]
    = & \overline{p}\,(1-q)\, \theta^V + (1-\overline{p}+\overline{p}\,q) \, \theta^V  - 2 \,(1-\overline{p}+\overline{p}\,q) \hat{\theta}^V .
\end{align*}
Setting the derivative to zero yields the unique interior maximizer
\[ 
    \hat{\theta}^V  = \frac{1}{2\cdot(1-\overline{p}+\overline{p}\,q)}\,{\theta}^V .
\]

Using the condition $\overline{p}\cdot(1-q)=\frac{1}{2}$, we obtain $\hat\theta^V=\theta^V$.
As the second derivative of the expected utility is negative, truthfully reporting \(\hat\theta^V=\theta^V\) maximizes the victim’s (Bayesian) expected utility, which constitutes a Bayes–Nash equilibrium.  
We can conclude that the negotiation mechanism $\mathcal{M}$ holds BIC.
\qed
\end{proof}

In our ransomware‐negotiation mechanism, the only condition needed for Bayesian incentive compatibility is $\overline{p} \cdot (1-q) = \frac{1}{2}$, where $q,\overline{p} \in [0,1]$. If the victim reports their type truthfully,  then whenever the attacker’s report satisfies $\hat{\theta}^A\leq \hat{\theta}^V$, her expected payment is 
$$(1-\overline{p}+\overline{p}\,q)\cdot \hat{\theta}^V = (1-\overline{p}+\overline{p}\,q)\cdot\psi(t) = \frac{1}{2}\psi(t).$$ 
Solving \(\overline{p}(1-q)=\tfrac12\) for \(q\) gives 
\[
q \;=\;1 - \frac{1}{2\overline{p}}, 
\quad \overline{p}\in\bigl[\tfrac12,1\bigr],
\]
or equivalently for \(\overline{p}\) in terms of \(q\), 
\[
\overline{p} \;=\;\frac{1}{2(1-q)},
\quad q\in\bigl[0,\tfrac12\bigr].
\]

Compared with the double-sided-blind auction proposed in \cite{vakilinia2021mechanism}, which only guarantees that the victim pays $\frac{1}{2} \psi(t)$ when $r_{min} \leq \frac{1}{2}\psi(t)$, our negotiation mechanism ensures an expected payment of $\frac{1}{2} \psi(t)$ both when $r_{min} \leq \frac{1}{2}\psi(t)$ and when $\frac{1}{2} \psi(t) < r_{min} < \psi(t)$. 

\subsection{Privacy-Preserving Negotiation Protocol}
We implement mechanism $\mathcal{M}$ using a maliciously secure two-party computation protocol, enabling the victim and the attacker to reach an agreement without revealing their private information. The design of our protocol ensures fairness between the two parties and efficiency of the computation without requiring trust in the other party or third parties.

Our protocol utilizes garbled circuits as its backend technology~\cite{bellare2012foundations}. A garbled circuit scheme involves two parties: a garbler and an evaluator. When running the garbling scheme, the garbler first takes a Boolean circuit $f$, representing the function to be evaluated jointly, produces a garbled circuit $\mathcal{C}$, and sends $\mathcal{C}$ to the evaluator. The garbler also generates labels representing the encrypted form of all possible values on all wires in $\mathcal{C}$. Hereafter, $[\cdot]$ denotes the garbled/encrypted variables. After the circuit and labels are generated, the garbler encodes its input $I_g$ to its garbled/encrypted label $[I_g]$, while the evaluator obtains the labels $[I_e]$ corresponding to its input $I_e$ by running Oblivious Transfer (OT) with the garbler.  According to the security properties of garbled circuits, given input labels $[I]$, the evaluator cannot infer or manipulate the value of $I$ in the function evaluation~\cite{bellare2012foundations}. Then the evaluator computes $\mathcal{C}$ with input labels, gets the output label $[O]$, and extracts the output $O = f(I_g, I_e)$. The garbler can additionally verify if $[O]$ corresponds to $O$ to guarantee the integrity of the evaluation done by the evaluator. Extending from the simple garbling scheme, authenticated garbling~\cite{wang2017authenticated} enables maliciously secure two-party computation with additional authentication information, meaning both parties should follow the protocol honestly; otherwise, any behavior deviating from the protocol will be detected by the other party. We briefly discuss our protocol below, and the complete protocol is in Fig.~\ref{fig:protocol}. 

In our protocol, $V$ and $A$ act as the garbler and evaluator, respectively. First, $V$ and $A$ need to agree on an exchange time $t_e$ and a strategy profile $\boldsymbol{\pi}$ including $q$ and $\overline{p}$. The victim's true valuation is based on $t_e$ where $\theta^V = \psi(t_e)$. %
$V$ generates a garbled circuit $\mathcal{C}$ according to $\mathcal{M}$ and $\boldsymbol{\pi}$, and sends $\mathcal{C}$ to $A$. Next, $V$ and $A$ each generate two $k$-bit uniform random strings $(S^{V}_0,S^V_1)$ and $(S^{A}_0, S^A_1)$. $V$ directly encodes its inputs, then sends $[S^V_0]$,  $[S^V_1]$, and $[\hat{\theta}^V]$ to $A$. $A$ obtains $[S^A_0]$, $[S^A_1]$, and $[\hat{\theta}^A]$ via OT from $V$. %
Subsequently, $A$ evaluates $\mathcal{C}$ and extract the final ransom $r_f$. Finally, $A$ and $V$ respectively verify the correctness of $\mathcal{C}$ and $[r_f]$, ensuring neither party tampered with the computation before agreeing on $r_f$. 

Due to the probabilistic nature of $a_2^V$ and $a_4^V$, the input to $\mathcal{C}$ in our protocol must include randomness. To execute the probabilistic choices privately and fairly, we determine the branch by comparing two $k$-bit uniformly distributed strings $S_0$ and $S_1$ with thresholds corresponding to probabilities $\overline{p}$ and $q$. Here $S_0=S_0^V\oplus S_0^A$ and $S_1=S_1^V\oplus S_1^A$. Taking $S_0$ as an example, since $S_0^V$ and $S_0^A$ are private information provided by $V$ and $A$, respectively, neither of them would know the value of $S_0$. Note that $V$ or $A$ may attempt to choose $S_0^V$ or $S_0^A$ adversarially rather than uniformly, hoping to gain an advantage in the negotiation. However, this random number generation scenario resembles the classic matching pennies game, where the unique Nash equilibrium is to choose a uniform random value for each player.  %
This method ensures that the probabilistic choices within the circuit remain uninfluenced by either party, thereby guaranteeing the fairness and privacy for both $V$ and $A$.

Once the result is checked by both parties, $A$ and $V$ reach consensus on the ransom value $r_f$ generated by the circuit. The ransom-key exchange can then be executed either through a trusted third party, such as an insurance company, or via a smart contract on a blockchain, as proposed in \cite{vakilinia2021mechanism}. Specifically, the smart contract releases the ransom only after the victim deposits both the ransom $r_f$ and earnest money, the attacker submits the decryption key to the contract, and the victim verifies its correctness.
Upon confirmation, the ransom is transferred to the attacker, and the earnest money is refunded to the victim. This process can be regarded as a form of reputation system recording. Moreover, insurance companies can also serve as a reputation system, as cyber insurers often track behavior of ransomware groups, including negotiation frequency, duration, and reliability of decryption after payment \cite{stone2020fbi}.

As $\psi(t)$ is non-increasing, our mechanism gives the attacker financial incentives to quickly return the key. Therefore, our mechanism safeguards the privacy of both parties, incentivizes attackers to cooperate quickly with victims, and ensures a fair resolution for both sides.

\begin{figure*}[!t]
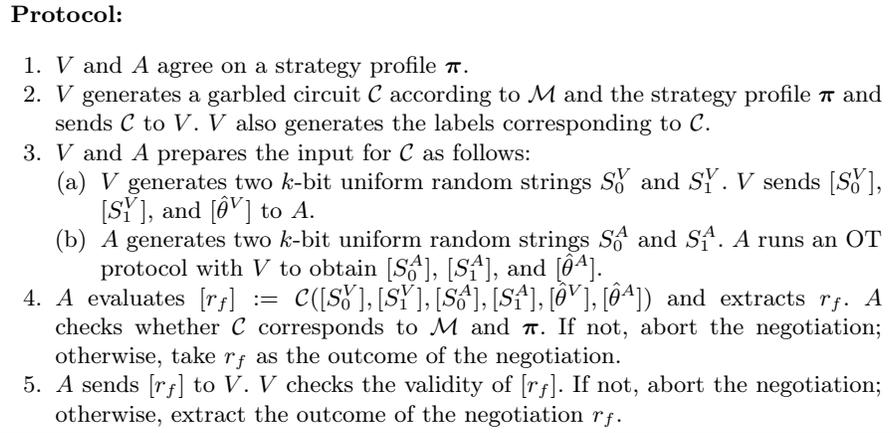

    \centering
    \fbox{
    \begin{minipage}{0.95\textwidth}
        \textbf{Protocol:}
        \begin{enumerate}
        \item $V$ and $A$ agree on a strategy profile $\boldsymbol{\pi}$.
        \item $V$ generates a garbled circuit $\mathcal{C}$ according to $\mathcal{M}$ and the strategy profile $\boldsymbol{\pi}$ and sends $\mathcal{C}$ to $V$. $V$ also generates the labels corresponding to $\mathcal{C}$.
        \item $V$ and $A$ prepares the input for $\mathcal{C}$ as follows:
        \begin{enumerate}
            \item[(a)] $V$ generates two $k$-bit uniform random strings $S^V_0$ and $S^V_1$. $V$ sends $[S^V_0]$,  $[S^V_1]$, and $[\hat{\theta}^V]$ to $A$.
            \item[(b)] $A$ generates two $k$-bit uniform random strings $S^A_0$ and $S^A_1$. $A$ runs an OT protocol with $V$ to obtain $[S^A_0]$, $[S^A_1]$, and $[\hat{\theta}^A]$.
        \end{enumerate}
        \item $A$ evaluates $[r_f] :=\mathcal{C}([S^V_0], [S^V_1], [S^A_0], [S^A_1], [\hat{\theta}^V], [\hat{\theta}^A])$ and extracts $r_f$. $A$ checks whether $\mathcal{C}$ corresponds to $\mathcal{M}$ and $\boldsymbol{\pi}$. If not, abort the negotiation; otherwise, take $r_f$ as the outcome of the negotiation.
        \item $A$ sends $[r_f]$ to $V$. $V$ checks the validity of $[r_f]$. If not, abort the negotiation; otherwise, extract the outcome of the negotiation $r_f$.
    \end{enumerate}
    \end{minipage}
    }
    \caption{A Privacy-Preserving Two-Party Negotiation Protocol}
    \label{fig:protocol}
\end{figure*}

\section{Garbled Circuit Implementation} \label{Sec:Imp}

We implemented our secure two-party protocol on the TinyGarble2 framework to instantiate our proposed mechanism \footnote[1]{The implementation is accessible to the public in the GitHub repository \url{https://GitHub.com/NomadShen/TinyGarble2.0}.}. TinyGarble2, utilizing EMP-tool~\cite{wang2016emp} as its backend, is an efficient \texttt{C++} framework for garbled-circuit-based S2PC in both semi-honest and malicious adversary models~\cite{hussain2020tinygarble2}.
The garbled-circuit implementation includes both parties inputting their respective $\hat{\theta}^V$ and $\hat{\theta}^A$, processing probabilistic choices, and finally reaching consensus on the ransom $r_f$. 

\subsection{Implementation}
Our implementation assumes V and A have already agreed on a strategy profile $\boldsymbol{\pi}$. The circuit has $3$ inputs from $V$, which are $S^V_0$, $S^V_1$, and $\hat{\theta}^V$. Correspondingly, the values that $A$ inputs into the circuit include $S^A_0$, $S^A_1$, and $\hat{\theta}^A$. Consistent with Step 4 of the Protocol in Fig.~\ref{fig:protocol}, the circuit's output is the garbled form of the ransom value $[r_f]$, but the plaintext value $r_f$ is extractable by both parties from $[r_f]$ to ensure the result is fairly distributed to both parties.

For probabilistic choice, we precompute $\overline{p}_{scale}=\lfloor \overline{p}\cdot 2^k\rfloor$, and then complete the probabilistic choice simply through comparison between $\overline{p}_{scale}$ and $S_0$. Additionally, for the rejection in $\mathcal{M}$, the circuit assigns $r_f$ a value of $0$. 

The multiplication and division operations involved in $\mathcal{M}$ include $q\hat{\theta}^V$ and $r_2/q$, where $q\in(0,1/2]$. In practice, $q$ is typically not very close to zero, so we precompute the values of $q$ and $\frac{1}{q}$ and perform scaling as $\overline{p}_{scale}$. Similarly, we compute $q\cdot \hat{\theta}^V=q_{scale}\cdot \hat{\theta}^V/{2^k}$ and $r_2/q = r_2\cdot (\frac{1}{q})_{scale}/{2^k}$. Here, division by $2^k$ is implemented as a right-shift operation. We set the bitwidth of the input $\hat{\theta}$ to $k_{\theta}$ and the bitwidth of the randomness to $k$. %

\subsection{Evaluation}

We evaluate the execution time of our protocol implemented on the TinyGarble2 framework. We run $V$ and $A$'s programs on the same computer using two separate threads, with the two parties communicating by the TCP protocol. Evaluation is performed on an Intel Core i7-1250U CPU with 8GB RAM.

In the experiment, we evaluate the protocol with different bitwidths of $k_{\theta}$ and $k$, which affect the precision of probabilities and multiplications.
Table~\ref{tab:evaluation} shows the execution time under different input bitwidths. The results show that execution time increases with input bitwidth, but all computations complete within $53$ ms, which is well within practical online negotiation requirements compared to the human negotiation process. %

\begin{table}[t]
    \centering
    \caption{Execution Time Evaluation Results}
    \label{tab:evaluation}
    \setlength{\tabcolsep}{20pt}
    \begin{tabular}{ccc}
        \hline
        \textbf{$k_\theta$} & \textbf{$k$} & \textbf{Execution time} \\
        \hline
        $8$ & $8$ & $18.96$ ms \\
        $8$ & $16$ & $32.56$ ms \\
        $8$ & $32$ & $41.02$ ms \\
        $16$ & $8$ & $34.95$ ms \\
        $16$ & $16$ & $40.25$ ms \\
        $16$ & $32$ & $52.18$ ms \\
        \hline
    \end{tabular}
\end{table}

\section{Conclusion} \label{Sec:Con}

This paper presents a game-theoretic and privacy-preserving approach to ransomware negotiation, a critical yet underexplored phase of ransomware incidents. We first highlight the importance of negotiation and reputation systems to both attackers and victims through multistage game-theoretical analysis. Then, we model the ransomware negotiation process as a finite-horizon alternating-offers bargaining game. Our analysis captures how strategic behavior influences outcomes in complete and incomplete information settings. To operationalize our findings, we design a Bayesian incentive-compatible negotiation mechanism that allows both parties to reach an agreement quickly while preserving privacy. Our implementation using garbled circuits enables efficient and secure negotiation without revealing sensitive data. To the best of our knowledge, this is the first work to integrate a formal bargaining model with a privacy-preserving, automated negotiation mechanism tailored to ransomware. By combining theoretical insights with practical cryptographic implementation, our work provides both a deeper understanding of ransomware negotiation dynamics and a viable pathway for more secure and efficient post-infection response strategies.

\end{document}